\newtheorem{Theorem}{Theorem}
\newtheorem{Remark}{Remark}
\newtheorem{Corollary}{Corollary}
\newcommand{\set}{\mathcal}
\begin{document}

\title{Achievable Rate Regions for Cooperative  Relay Broadcast Channels with Feedback}

\author{
  \IEEEauthorblockN{Youlong Wu\\}
  \IEEEauthorblockA{Lehrstuhl f\"ur Nachrichtentechnik\\
    Technische Universit\"at M\"{u}nchen, Germany\\
     youlong.wu@tum.de }

}


\maketitle

 \begin{abstract}
 Achievable rate regions for  cooperative relay broadcast channels with rate-limited feedback  are proposed. Specifically, we consider two-receiver  memoryless broadcast channels where each receiver  sends feedback signals  to the transmitter through a noiseless and rate-limited feedback link,  meanwhile, acts as a relay to transmit cooperative information to the other receiver. It's shown that the proposed rate regions improve on the known regions that  consider either relaying cooperation or feedback communication, but not both.
 \end{abstract}


\section{Introduction}
Relay broadcast channels (RBCs) describe  communication networks where the transmitter sends information to a set of receivers with the help of relaying communication. In \cite{Dabora'06}, \cite{Kramer'05}, the dedicated-relay broadcast channel (BC) model  was studied, where a relay node was introduced to the original BC to assist the cooperation between  two receivers.  Another RBC model, called  cooperative RBC model was studied in  \cite{Liang'07Veeravalli}, \cite{Liang'07Kramer}, where each receiver acts as a relay and sends cooperative information to the other receiver. It was shown that even partially cooperation (only one receiver relays cooperative information)  still  improves on the capacity region of original BC.

In a different line of work,  many studies have been done on  memoryless BCs with feedback, where the receivers  send feedback signals to the transmitter through feedback links. In \cite{gamal'78}, it shows that feedback cannot increase the capacity region for all physically degraded BCs.  The first simple example BC where feedback  increases capacity was presented by Dueck \cite{dueck}. Based on Dueck's idea, Shayevitz and Wigger \cite{wigger}  proposed an achievable region for BCs with generalized feedback.  
 Other achievable regions for   BCs with perfect or noisy feedback, have been proposed by Kramer \cite{kramer} and  Venkataramanan and Pradhan \cite{venkataramananpradhan11}.  Most recently, Wu and Wigger \cite{Wu'ISIT14,YoulongArxiv} showed that any positive feedback rate can increase the capacity region for a large class of BCs, called strictly essentially less noisy BCs, unless it is physically degraded.  
  
Cooperative RBCs with prefect feedback was investigated in \cite{Liang'07Veeravalli}, where the capacity region was established for the case of perfect feedback from the receiver to the relay.
 In this paper, we consider the cooperative  RBCs with rate-limited feedback from the receivers to the transmitter, i.e., each receiver  sends feedback signals  to the transmitter through a noiseless and rate-limited feedback link, and meanwhile, acts as a relay to transmit cooperative information to the other receiver. 
 	
 In the first work, we first  study the \emph{partially} cooperative RBC with one-sided feedback (only one receiver sends feedback signals and relays cooperative information to the other receiver). We proposed a new coding scheme (Scheme 1) based on block-Markov coding, Marton's coding \cite{Marton'79}, partial decode-forward  \cite{Cover'79} and compress-forward strategies \cite{Cover'79}. Specifically, in each block, the transmitter uses Marton's coding to send the source messages and \emph{forward} the feedback message. The receiver who acts as relay  performs combined partial decode-forward and compress-forward, and sends the compression message as feedback information. The other Receiver  uses backward decoding to jointly decode its private message and the compression message.  It is shown that when feedback rate is sufficiently large, our coding scheme strictly improves on Liang and Kramer's region \cite{Liang'07Kramer}, which is tight for the  semideterministic partially cooperative RBCs and orthogonal partially cooperative RBCs. 
 
In the second work, we  study the \emph{fully} cooperative RBCs with two-sided feedback (both  receiver send feedback signals and relay cooperative information). Two block-Markov coding schemes (Scheme 2A and 2B) are proposed based on Scheme 1. Specifically,  in each block, the transmitter uses Marton's coding to send the source messages and forward the feedback messages sent by both receivers. In Scheme 2A, both receivers apply compress-forward and backward decoding.  Scheme 2B is similar to Scheme 1A except that one of the two receiver uses hybrid relaying strategy and sliding-window decoding.  The resulting rate regions strictly improve on Wu and Wigger's region  \cite[Theorem 1]{Wu'ISIT14}, which shows that   feedback  strictly  increases capacity region for a large class of BCs.
  
  Note that in our coding schemes the transmitter can reconstruct the  receivers' inputs due to a delicate design,  which allows to superimpose the Marton's codes on the receivers inputs, and thus attains cooperation between the transmitter and the receivers.
  
This paper is organized as follows.  Section \ref{Sec:Model} describes   cooperative RBC with feedback and  our main results are presented in Section \ref{Sec:results}. Section \ref{Sec:compare} compares various achievable rate regions and shows that our   regions strictly improve the known rate regions that consider either relay cooperation or feedback communication, but not both.  Sections \ref{sec:FullCooperation} and \ref{Sec:ParRBC} contain the proofs of our results in Section \ref{Sec:results}.  Finally, Section  \ref{Sec:conclusion} concludes this paper.

Notations: 
 We use capital letters to denote  random variables and small letters for their realizations, e.g. $X$ and $x$. For  $j\in\mathbb{Z}^+$, we use the short hand notations $X^j$ and $x^j$ for the tuples $X^j:=(X_1,\ldots, X_j)$ and $x^j:=(x_1,\ldots, x_j)$.  Given a positive integer $n$, let $\mathbf{1}_{[n]}$ denote the all-one tuple of length $n$, e.g., $\mathbf{1}_{[3]}=(1,1,1)$. The abbreviation i.i.d. stands for \emph{independent and identically distributed}.

Given a distribution $P_A$ over some alphabet $\set{A}$, a positive real number $\varepsilon>0$, and a positive integer $n$, let $\set{T}_{\varepsilon}^n(P_A)$ denote the typical set in \cite{book:gamal}. 
 
\section{System model}\label{Sec:Model}

Consider 3-node  cooperative RBC  with  feedback, as shown in Fig. \ref{fig:RBC_full}. 
  This setup is characterized by seven  finite alphabets $\set{X}, \set{X}_k, \set{Y}_k, \set{F}_k$, for $k\in\{1,2\}$,  a channel law $P_{Y_1Y_2|XX_1X_2}$ and nonnegative feedback rates $R_{\text{fb},1},R_{\text{fb},2}$. Specifically, at  discrete-time $i\in[1:n]$, the transmitter  sends  input $x_{i}\in \set{X}$.  Receiver $k$ observes  output  $y_{k,i}\in\set{Y}_k$ and relays cooperative information $x_{k,i}\in\set{X}_k$ to the other receiver. When both receiver relay information, it is called \emph{fully} cooperative RBC. When only one receiver relays information, it is called \emph{partially} cooperative RBC. After observing $y_{k,i}$, Receiver $k$ also sends a feedback signal $f_{k,i}\in \set{F}_{k,i}$ to the transmitter, where $\set{F}_{k,i}$ denotes the finite alphabet of $f_{k,i}$. The feedback link between  the transmitter and Receiver $k$ is  noiseless and \emph{rate-limited} to $R_{\textnormal{fb},k}$ bits per channel use. In other words, if the transmission takes place over a total blocklength $n$, then
\begin{IEEEeqnarray}{rCl}\label{consFB0}
|\set{F}_{k,1}|&\times&\cdots\times|\set{F}_{k,n}|\leq 2^{nR_{\textnormal{fb},k}}, \quad k\in\{1,2\}.
\end{IEEEeqnarray}
{In the communication, the transmitter  wishes to send message $M_0\in[1:2^{nR_0}]$ to both receivers, and message $M_k\in[1:2^{nR_k}]$ to Receiver  $k$. }

\begin{figure}
\centering
\includegraphics[width=0.45\textwidth]{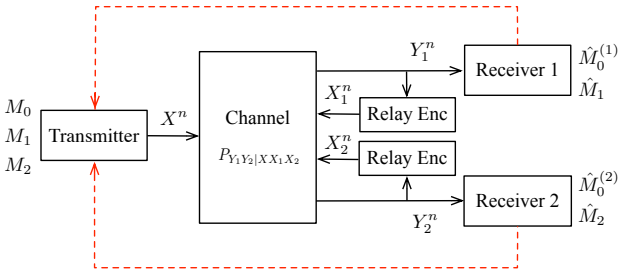}
\caption{Cooperative relay broadcast channel with feedback} \label{fig:RBC_full}
\end{figure}


 A $(2^{nR_0},2^{nR_1},2^{nR_2},n)$  code for this channel consists of 
\begin{itemize}
\item  message sets $\set{M}_0:=[1:2^{nR_0}]$ and $\set{M}_k:=[1:2^{nR_k}]$;
\item a source encoder that maps $(M_0,M_1,M_2)$ to a sequence $X_{i}\big(M_0,M_1,M_2,F^{i-1}_1,F^{i-1}_{2}\big)$; 
\item two receiver  encoders where Receiver $k$ maps $Y^{i-1}_k$ to a sequence $X_{k,i}(Y^{i-1}_k)$; 
\item two decoders where  Receiver $k$ estimates $(\hat{M}^{(k)}_0,\hat{M}_k)$ based on $Y^n_k$,
\end{itemize}
for each time $i\in[1:n]$ and $k\in\{1,2\}$.
Suppose $M_0,M_1$ and $M_2$ are uniformly distributed and independent with each other. A rate tuple $(R_0,R_1,R_2)$ with average feedback rates $R_{\textnormal{fb},k}$, for $k\in\{1,2\}$,  is called achievable if for every blocklength $n$, there exists  a $(2^{nR_0},2^{nR_1},2^{nR_2},n)$ code such that the average probability of error
\begin{IEEEeqnarray*}{rCl}
P^{(n)}_e=\text{Pr}[(\hat{M}_0^{(1)},\hat{M}_0^{(2)},\hat{M}_1,\hat{M}_2 )\neq (M_0,M_0,M_1,M_2)]
\end{IEEEeqnarray*}
tends to 0 as $n\to\infty$.   The capacity region is all nonnegative  rate tuples $(R_0,R_1,R_2)$ such that  $\lim_{n\to \infty }P_e^{(n)}=0$. 


\section{Main Results}\label{Sec:results}
In this section, we present our main results as the following   theorems. The  proofs are given in Section \ref{Sec:ParRBC} and Section \ref{sec:FullCooperation}.

\begin{Theorem}\label{Them:Scheme1}
 For the  \emph{partially} cooperative BRC with receiver-transmitter feedback, the  capacity region includes the set $\set{R}_1$ of all nonnegative rate tuples $(R_0,R_1,R_2)$ that satisfy 
 \begin{subequations} \label{eq:regionSch1}
\begin{IEEEeqnarray}{rCl}
R_0+R_1&\leq &I(U_0,U_1;Y_1|X_1) \\
R_0+R_2 &\leq & I(U_0,U_2,X_1;Y_2)\nonumber\\
&&-I(\hat{Y}_1;Y_1|U_0,U_2,X_1,Y_2)\\
R_0+R_1+R_2&\leq &  I(U_1;Y_1|U_0,X_1)+I(U_0,U_2,X_1;Y_2)\nonumber\\
&&-I(\hat{Y}_1;Y_1|U_0,U_2,X_1,Y_2)\nonumber\\&&-I(U_1;U_2|U_0,X_1)\\
R_0+R_1+R_2&\leq &I(U_0,U_1;Y_1|X_1) + I(U_2;\hat{Y}_1,Y_2|U_0,X_1)\nonumber\\
&&-I(U_1;U_2|U_0,X_1)\\
2R_0+R_1+R_2&\leq &I(U_0,U_1;Y_1|X_1)+ I(U_0,U_2,X_1;Y_2)\nonumber\\
&&-I(\hat{Y}_1;Y_1|U_0,U_2,X_1,Y_2)\nonumber\\
&&-I(U_1;U_2|U_0,X_1)
\end{IEEEeqnarray}
for some pmf ${ P_{U_0U_1U_2X_1}}P_{\hat{Y}_1|U_0X_1Y_1}$ and function $X=f(U_0,U_1,U_2)$ such that  
\begin{IEEEeqnarray}{rCl}\label{FbrateTh1}
&&I(\hat{Y}_1;Y_1|U_0,X_1)\leq R_{\text{fb},1}.
\end{IEEEeqnarray}
 \end{subequations}
\end{Theorem}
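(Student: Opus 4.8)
The plan is to combine block-Markov coding, Marton's coding with a common message, decode-forward, and Wyner--Ziv compress-forward, and to conclude by Fourier--Motzkin elimination together with a limiting argument. Transmission proceeds over $B$ blocks of length $n$; each message $M_j$, $j\in\{0,1,2\}$, is split into $B-1$ independent parts of rate $\frac{B}{B-1}R_j$, so that the $\frac{B}{B-1}$ overhead vanishes as $B\to\infty$. Fix a pmf $P_{U_0U_1U_2X_1}P_{\hat Y_1|U_0X_1Y_1}$ and the function $f$, and set the compression rate $R_c=I(\hat Y_1;Y_1|U_0,X_1)+\epsilon$ and the Marton binning rate $\tilde R_2=I(U_1;U_2|U_0,X_1)+\epsilon$ (taking $\tilde R_1=0$). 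In block $b$ the relay input is a codeword $x_1^n(m_{0,b-1})$; on top of it we draw common codewords $u_0^n(m_{0,b},l_{b-1})$, where $l_{b-1}\in[1{:}2^{nR_c}]$ is a compression index produced in the previous block, so $u_0^n$ carries simultaneously the new common message part and the ``forwarded feedback''; on top of each $(u_0^n,x_1^n)$ we draw Marton codewords $u_1^n(m_{1,b})$ and $u_2^n(m_{2,b},q)$ with $q\in[1{:}2^{n\tilde R_2}]$; and for each $(u_0^n,x_1^n)$ we draw a compression codebook $\hat y_1^n(l_b)$ of size $2^{nR_c}$ according to $\prod P_{\hat Y_1|U_0X_1}$. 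The channel input is $x^n=f(u_0^n,u_1^n,u_2^n)$ symbol-by-symbol.

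\textbf{Encoding and the ``delicate design.''} Having decoded $m_{0,b-1}$ at the end of block $b-1$, Receiver~1 puts $x_1^n(m_{0,b-1})$ on the relay link in block $b$; since $m_{0,b-1}$ was also sent by the transmitter, the transmitter can recompute $x_1^n(m_{0,b-1})$ and superimpose its codewords on it, which is exactly what lets the Marton codes be coherently combined with the relay input. Within block $b$ the transmitter chooses $q$ so that $(u_1^n(m_{1,b}),u_2^n(m_{2,b},q))$ is jointly typical given $(u_0^n,x_1^n)$; this succeeds w.h.p.\ by the covering lemma because $\tilde R_2>I(U_1;U_2|U_0,X_1)$. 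At the end of block $b$, Receiver~1 (which by then knows $m_{0,b-1},l_{b-1},m_{0,b}$, hence $u_0^n$ and $x_1^n$) finds $l_b$ with $(\hat y_1^n(l_b),y_1^n,u_0^n,x_1^n)$ jointly typical --- possible since $R_c>I(\hat Y_1;Y_1|U_0,X_1)$ --- and sends $l_b$ on the feedback link, which needs $R_c\le R_{\textnormal{fb},1}$, i.e.\ \eqref{FbrateTh1} up to the vanishing $\epsilon$ and the usual closure argument. The transmitter then embeds $l_b$ into $u_0^n$ of block $b+1$.

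\textbf{Decoding.} Receiver~1, already knowing $x_1^n$ and $l_{b-1}$, decodes $(m_{0,b},m_{1,b})$ by joint typicality, and the packing lemma gives $R_0+R_1\le I(U_0,U_1;Y_1|X_1)$. Receiver~2 decodes backward: before processing block $b$ it has, from blocks $b+1,\dots,B$, recovered $m_{0,b}$ (carried by $x_1^n$ of block $b+1$) and $l_b$ (carried by $u_0^n$ of block $b+1$), hence it knows $\hat y_1^n$ of block $b$; it then jointly decodes $(l_{b-1},m_{2,b},q,m_{0,b-1})$, equivalently $(u_0^n,u_2^n,x_1^n)$, from $(\hat y_1^n,y_2^n)$. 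Bounding the error events for each wrong subset of $\{l_{b-1}\},\{(m_{2,b},q)\},\{m_{0,b-1}\}$, substituting $R_c=I(\hat Y_1;Y_1|U_0,X_1)+\epsilon$ and $\tilde R_2=I(U_1;U_2|U_0,X_1)+\epsilon$, and using the Markov relation $\hat Y_1-(U_0,X_1,Y_1)-(U_1,U_2,Y_2)$ --- which yields the Wyner--Ziv identity $I(\hat Y_1;Y_1|U_0,X_1)-I(\hat Y_1;U_2,Y_2|U_0,X_1)=I(\hat Y_1;Y_1|U_0,U_2,X_1,Y_2)$, so that $Y_2$ is free side information --- produces the remaining inequalities, including the Marton penalty $-I(U_1;U_2|U_0,X_1)$. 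Finally, rate-splitting the common message into a part decoded jointly with $M_1$ at Receiver~1 and a part decoded jointly with $M_2$ at Receiver~2 and performing the standard Fourier--Motzkin elimination for Marton's inner bound with a common message gives the five bounds of \eqref{eq:regionSch1}, including the $2R_0+R_1+R_2$ bound; letting $B\to\infty$ completes the proof.

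\textbf{Main obstacle.} The crux is the interaction between Marton's coding and the block-Markov compress-forward loop: one must design the codebook so that (i) the transmitter genuinely reconstructs $x_1^n$ and can therefore superimpose the Marton codewords on it, and (ii) in the backward pass at Receiver~2 the quantities $m_{0,b}$ and the compression index $l_b$ are already in hand, so that the compression rate $R_c$ enters the error analysis only through the ``net'' terms $-I(\hat Y_1;Y_1|U_0,U_2,X_1,Y_2)$ rather than inflating the message-rate bounds. Verifying that the joint-typicality error exponents at Receiver~2 are governed exactly by the claimed mutual-information quantities, despite the several superimposed layers and the compression index being threaded through the feedback link and $U_0$, is the delicate part; once that bookkeeping is pinned down, the remainder is a routine packing/covering-lemma computation followed by Fourier--Motzkin elimination.
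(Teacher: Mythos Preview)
Your overall architecture---block-Markov coding with Marton's cloud/satellite structure, Receiver~1 partially decoding then compressing, Receiver~2 backward decoding, and the transmitter reconstructing $X_1$ so as to superimpose on it---matches the paper's scheme. Two concrete details, however, do not go through as written.

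First, fixing $\tilde R_1=0$ and absorbing the entire Marton penalty into the $U_2$ bin is a genuine loss. With that choice, the error event at Receiver~2 in which only $(m_{2,b},q)$ is wrong forces $R_2+I(U_1;U_2|U_0,X_1)\le I(U_2;\hat Y_1,Y_2|U_0,X_1)$, and after elimination the $R_0+R_2$ bound carries an unwanted $-I(U_1;U_2|U_0,X_1)$ that is absent from the theorem's $R_0+R_2$ constraint. The paper leaves both bin rates $R'_1,R'_2$ free subject only to $R'_1+R'_2\ge I(U_1;U_2|U_0,X_1)$, and it rate-splits each \emph{private} message $M_k=(M_{c,k},M_{p,k})$ so that $\mathbf m_c=(M_0,M_{c,1},M_{c,2})$ is carried in $U_0$ and forwarded via $X_1$---not the common message $M_0$, as you write. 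Your ``rate-splitting the common message'' is in the wrong direction and does not restore the missing degree of freedom; with $\tilde R_1=0$ fixed you cannot reach the full region $\set R_1$ for a given pmf.

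Second, the claim that Receiver~2 ``knows $\hat y_1^n$ of block $b$'' before decoding block $b$ is incorrect. In your own construction the compression codebook is drawn for each $(u_0^n,x_1^n)$, so $\hat y_1^n$ is effectively indexed by $(l_b;\,m_{0,b},l_{b-1},m_{0,b-1})$. From block $b{+}1$ Receiver~2 recovers only $m_{0,b}$ and $l_b$; the indices $l_{b-1}$ and $m_{0,b-1}$ are precisely the unknowns in block $b$. Hence $\hat y_1^n$ cannot be treated as a known enhanced observation: it must be \emph{jointly} decoded with $(x_1^n,u_0^n,u_2^n)$, exactly as the paper does, and the additional error events in which the hypothesized $\hat y_1^n$ is itself wrong must enter the packing analysis. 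This is fixable, but it is exactly the ``delicate bookkeeping'' you flag and then leave unverified. Relatedly, the paper has Receiver~1 relay both $\mathbf m_{c,b-1}$ and the feedback index $m_{\textnormal{fb},1,b-1}$ through $X_1$, whereas you place the feedback index one layer up in $U_0$; this changes which error events appear at Receiver~2 and needs to be tracked, not asserted.
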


\begin{proof}
See Section \ref{Sec:ParRBC}.
\end{proof}

\begin{Remark}\label{Remark:Wyner-Ziv}
The rate constraint \eqref{FbrateTh1} can be relaxed as
\begin{IEEEeqnarray}{rCl}
&&I(\hat{Y}_1;Y_1|U_0,X_1,Y_2)\leq R_{\text{fb},1}
\end{IEEEeqnarray}
by using a trick in \cite[Section V]{YoulongArxiv}, where  the receivers use the feedback links to send Wyner-Ziv compression messages about their previously observed outputs to the transmitter.  
\end{Remark}

\begin{Remark}\label{Remark:Liang}
If $\hat{Y}_1=\emptyset$, i.e.,  no feedback signal is sent by Receiver 1, then rate region $\set{R}_1$  reduces to $\set{R}_{\text{Liang}}$, which is the set of all nonnegative rate tuples $(R_0,R_1,R_2)$ satisfying
 \begin{subequations} \label{rate:Liang}
\begin{IEEEeqnarray}{rCl}
R_0+R_1&\leq &I(U_0,U_1;Y_1|X_1) \\
R_0+R_2 &\leq & I(U_0,U_2,X_1;Y_2)\\
R_0+R_1+R_2&\leq &  I(U_1;Y_1|U_0,X_1)+I(U_0,U_2,X_1;Y_2)\nonumber\\&&-I(U_1;U_2|U_0,X_1)\\
R_0+R_1+R_2&\leq &I(U_0,U_1;Y_1|X_1) + I(U_2;Y_2|U_0,X_1)\nonumber\\
&&-I(U_1;U_2|U_0,X_1)\\
2R_0+R_1+R_2&\leq &I(U_0,U_1;Y_1|X_1)+ I(U_0,U_2,X_1;Y_2)\nonumber\\
&&-I(U_1;U_2|U_0,X_1)
\end{IEEEeqnarray}
\end{subequations}
for some pmf ${ P_{U_0U_1U_2X_1}}$ and function $X=f(U_0,U_1,U_2)$. This rate region was proposed by Liang and Kramer \cite[Theorem 2]{Liang'07Kramer}, and was shown to be the capacity region for  semideterministic partially cooperative RBCs and orthogonal partially cooperative RBCs. 
\end{Remark}

\begin{Theorem}\label{Them:Scheme2A}
 For the \emph{fully} cooperative BRC  with two-sided and rate-limited feedback, the  capacity region includes the set $\set{R}_2$ of all nonnegative rate tuples $(R_0,R_1,R_2)$ that satisfy 
 \begin{subequations} \label{eq:Scheme2A}
\begin{IEEEeqnarray}{rCl}
R_0+R_1&\leq &I(U_0,U_1;\hat{Y}_2,Y_1|X_1,X_2)+\Delta_1 \\
R_0+R_2 &\leq & I(U_0,U_2;\hat{Y}_1,Y_2|X_1,X_2)+\Delta_2 \\
R_0+R_1+R_2&\leq & I(U_0,U_1;\hat{Y}_2,Y_1|X_1,X_2)+\Delta_1\nonumber\\
&&+I(U_2;Y_2,\hat{Y}_1|U_0,X_1,X_2)\nonumber\\
&&-I(U_1;U_2|U_0,X_1,X_2)\\
R_0+R_1+R_2&\leq & I(U_0,U_2;\hat{Y}_1,Y_2|X_1,X_2)+\Delta_2\nonumber\\
&&+I(U_1;Y_1,\hat{Y}_2|U_0,X_1,X_2)\nonumber\\
&&-I(U_1;U_2|U_0,X_1,X_2)\\
2R_0+R_1+R_2&\leq & I(U_0,U_1;\hat{Y}_2,Y_1|X_1,X_2)+\Delta_1\nonumber\\
&&+I(U_0,U_2;\hat{Y}_1,Y_2|X_1,X_2)+\Delta_2\nonumber\\
&&-I(U_1;U_2|U_0,X_1,X_2)
\end{IEEEeqnarray}
for some pmf $P_{X_1}P_{X_2}{ P_{U_0U_1U_2|X_1X_2}}P_{\hat{Y}_1|X_1Y_1}P_{\hat{Y}_2|X_2Y_2}$ and function $X=f(U_0,U_1,U_2)$ such that 
\begin{IEEEeqnarray}{rCl}
&&I(\hat{Y}_1;Y_1|X_1)\leq R_{\text{fb},1}\label{Fbrate1Th2} \quad \textnormal{and} \quad I(\hat{Y}_2;Y_2|X_2)\leq R_{\text{fb},2}\label{Fbrate2Th2}
\end{IEEEeqnarray}
 where
\begin{IEEEeqnarray}{rCl} 
&&\Delta_1=\min\{0,I(X_{2};Y_1|X_1)\!-\!I(\hat{Y}_{2};Y_{2}|X_1,X_2,Y_1)\}\nonumber\\
&&\Delta_2=\min\{0,I(X_{1};Y_2|X_2)\!-\!I(\hat{Y}_{1};Y_{1}|X_1,X_2,Y_2)\}.\nonumber
\end{IEEEeqnarray}
 \end{subequations}
\end{Theorem}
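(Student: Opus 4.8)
The plan is to establish achievability with a $B$-block block-Markov scheme that runs, symmetrically at both receivers, the compress--forward / backward-decoding idea already used for Theorem~\ref{Them:Scheme1}. Split each $M_j$ into $B-1$ independent sub-messages of per-block rate $\tfrac{B}{B-1}R_j$. Codebook: draw relay codebooks $\{x_1^n(\ell_1)\}$, $\{x_2^n(\ell_2)\}$ i.i.d.\ $\sim P_{X_1}$, $\sim P_{X_2}$ with $\ell_k$ ranging over $\approx 2^{nR_{\text{fb},k}}$ values; for each $(\ell_1,\ell_2)$ a Marton superposition codebook --- cloud centres $u_0^n(\ell_1,\ell_2,m_0)\sim\prod P_{U_0|X_1X_2}$ and satellite sub-codebooks $u_1^n(\cdot,m_1,j_1)\sim\prod P_{U_1|U_0X_1X_2}$, $u_2^n(\cdot,m_2,j_2)\sim\prod P_{U_2|U_0X_1X_2}$ of extra rates $\tilde R_1,\tilde R_2$ --- with $x^n=f(u_0^n,u_1^n,u_2^n)$ componentwise; and quantization codebooks $\{\hat y_k^n(\cdot)\}$ of covering rate slightly above $I(\hat Y_k;Y_k|X_k)$, the index fed back (and relayed next block) fitting in $R_{\text{fb},k}$ bits. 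In block $b$ the transmitter, knowing the feedback indices $\ell_1^{(b-1)},\ell_2^{(b-1)}$ of block $b-1$, lets $x_1^n(\ell_1^{(b-1)}),x_2^n(\ell_2^{(b-1)})$ be the relay ``clouds'' the receivers will indeed send, runs Marton's mutual-covering step to pick a jointly typical $(u_1^n,u_2^n)$ for the block-$b$ sub-messages (feasible when $\tilde R_1+\tilde R_2\ge I(U_1;U_2|U_0,X_1,X_2)$), and sends $x^n$; receiver $k$ quantizes $y_k^n$ to a $\hat y_k^n$ jointly typical with $(x_k^n(\ell_k^{(b-1)}),y_k^n)$ (feasible and simultaneously feedable when $I(\hat Y_k;Y_k|X_k)\le R_{\text{fb},k}$), sends the index $\ell_k^{(b)}$ on the feedback link, and uses $x_k^n(\ell_k^{(b)})$ as its relay codeword in block $b+1$. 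A point to verify here --- the ``delicate design'' --- is that the transmitter reconstructs exactly the codewords the receivers transmit, so the overall empirical distribution matches $P_{X_1}P_{X_2}P_{U_0U_1U_2|X_1X_2}P_{\hat Y_1|X_1Y_1}P_{\hat Y_2|X_2Y_2}P_{Y_1Y_2|XX_1X_2}$.

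Decoding is backward. When receiver $1$ treats block $b$ it already knows, from block $b+1$, the index $\ell_2^{(b)}$ of receiver $2$'s relay codeword there, hence the quantization word $\hat y_2^n$ of block $b$ (after a Wyner--Ziv refinement inside its bin, using its own block-$b$ pair $(y_1^n,x_1^n(\ell_1^{(b-1)}))$ as side information); it also knows $\ell_1^{(b-1)}$; using $(y_1^n,\hat y_2^n,x_1^n(\ell_1^{(b-1)}))$ it looks for the unique $(m_{0,b},m_{1,b},j_{1,b})$ and index $\ell_2^{(b-1)}$ (needed when it later treats block $b-1$) with jointly typical codewords. Receiver $2$ is symmetric. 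Encoding reliability is the Marton and covering conditions above. The backward-decoding error events at receiver $1$ yield per-block inequalities of the familiar Marton-with-common-message shape in the combined rates $R_0+R_1+\tilde R_1$ and $R_1+\tilde R_1$ (partner counterparts at receiver $2$), but with each receiver-$1$ bound sharpened by $\Delta_1$ and each receiver-$2$ bound by $\Delta_2$: the correction appears because, when the guessed index $\ell_2^{(b-1)}$ is wrong, $x_2^n$ behaves as a fresh codeword carried over the $X_2\!\to\!Y_1$ link of rate $I(X_2;Y_1|X_1)$, while the quantization word must still be pinned down through a Wyner--Ziv bin of rate $I(\hat Y_2;Y_2|X_1,X_2,Y_1)$, and the (negative part of the) difference is subtracted.

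Collecting these per-block inequalities and eliminating $\tilde R_1,\tilde R_2\ge0$ (with $\tilde R_1+\tilde R_2\ge I(U_1;U_2|U_0,X_1,X_2)$) by Fourier--Motzkin --- exactly as in the derivation of Marton's region with a common message, which also produces the $2R_0$-type bound --- gives precisely the five inequalities of \eqref{eq:Scheme2A} subject to $I(\hat Y_k;Y_k|X_k)\le R_{\text{fb},k}$. Letting $n\to\infty$ and then $B\to\infty$, so the rate inflation $\tfrac{B}{B-1}$ and the initialization overhead of the first and last blocks vanish, completes the proof.

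The main obstacle is isolating the $\Delta_1,\Delta_2$ corrections cleanly: one must organize the backward-decoding error analysis at each receiver so that recovering the partner's quantization word through the relay link behaves, jointly with decoding the messages, like a Wyner--Ziv step with decoder side information $(X_1,X_2,Y_k)$ and helper rate $I(X_{\bar k};Y_k|X_k)$, and then verify that the induced penalty on receiver $k$'s bounds is exactly $\min\{0,\,I(X_{\bar k};Y_k|X_k)-I(\hat Y_{\bar k};Y_{\bar k}|X_1,X_2,Y_k)\}$ --- in particular that it vanishes when the relay link is good enough. The remaining ingredients --- Marton mutual covering, the covering lemma for the quantizers, the joint-typicality counting, the Fourier--Motzkin step, and the consistency of the transmitter's codeword reconstruction --- are routine though lengthy, and several can be imported directly from the proof of Theorem~\ref{Them:Scheme1}.
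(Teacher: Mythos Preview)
Your proposal is correct and follows essentially the same approach as the paper's Scheme~2A: block-Markov coding with Marton's superposition on top of the relay codewords $x_1^n,x_2^n$ (which the transmitter reconstructs from feedback), compress--forward at both receivers, backward decoding, and Fourier--Motzkin elimination, with the $\Delta_k$ terms arising exactly from the error event where the partner's compression/relay index is mis-decoded. Two small discrepancies to clean up: the paper's Scheme~2A does not Wyner--Ziv bin the compression index (the full index $m_{\text{fb},k,b}$ is both fed back and carried by $x_k^n$, so your ``refinement inside its bin'' step is superfluous here), and your codebook description omits the common/private rate splitting $m_{k,b}=(m_{c,k,b},m_{p,k,b})$ that the paper uses---without it the Fourier--Motzkin step will not produce the stated five inequalities, so make that split explicit rather than leaving it implicit in the reference to ``Marton's region with a common message.''
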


\begin{proof}
See Section \ref{sec:scheme2A}.
\end{proof}

\begin{Remark}\label{Remark:WU}
If $R_0=0$ and ${X}_1=X_2=\emptyset$, i.e., both receivers send feedback signals without relaying cooperative information, by relaxing rate constraint \eqref{FbrateTh1} as in Remark \ref{Remark:Wyner-Ziv}, the rate region $\set{R}_1$ reduces to $\set{R}_\text{Wu}$, which is the set of all nonnegative rate tuples $(R_0,R_1,R_2)$ satisfying 
\begin{subequations}\label{eq:region_relay}
\begin{IEEEeqnarray}{rCl}
R_1 &\leq& I(U_0,U_1;Y_1,\hat{Y}_2)-I(\hat{Y}_2;Y_2|Y_1)\label{In2R1}\\
R_2 &\leq& I(U_0,U_2;Y_2,\hat{Y}_1)-I(\hat{Y}_1;Y_1|Y_2) \label{In2R2}\\
R_1\!+\!R_2 &\leq& I(U_0,U_1;Y_1,\hat{Y}_2) -I(\tilde{Y}_2;Y_2| Y_1)\nonumber \\
& & + I(U_2;Y_2,\hat{Y}_1|U_0) - I(U_1;U_2|U_0)\label{In2Sum1}\\
R_1\!+\!R_2 &\leq& I(U_0,U_2;Y_2,\hat{Y}_1)-I(\hat{Y}_1;Y_1|Y_2)\nonumber \\ 
& &+ I(U_1;Y_1,\hat{Y}_2|U_0) - I(U_1;U_2|U_0)\label{In2Sum2}\IEEEeqnarraynumspace\\
R_1\!+\!R_2 &\leq&  I(U_0,U_1;Y_1,\hat{Y}_2) -I(\hat{Y}_2;Y_2|Y_1)- I(U_1;U_2|U_0)  \nonumber \\ 
& & +I(U_0,U_2;Y_2,\tilde{Y}_1)-I(\hat{Y}_1;Y_1|Y_2) \label{In2Sum3}
\end{IEEEeqnarray}
for some pmf $P_{U_0U_1U_2}P_{\tilde{Y}_1|Y_1}P_{\tilde{Y}_2|Y_2}$ and function $X=f(U_0,U_1,U_2)$ such that
\begin{IEEEeqnarray}{rCl}
I(\tilde{Y}_1;Y_1|Y_2)&\leq&    R_{\textnormal{Fb},1} \quad \textnormal{and} \quad 
I(\tilde{Y}_2;Y_2|Y_1)\leq    R_{\textnormal{Fb},2}.\IEEEeqnarraynumspace
  \end{IEEEeqnarray}
\end{subequations}
  This rate region coincides with Wu and Wigger's region in \cite[Corrollary 1] {Wu'ISIT14}, which shows feedback can strictly increase the entire capacity region for a large class of BCs, called strictly essentially less noisy BCs, unless it is physically degraded. 
\end{Remark}

In the scheme for Theorem \ref{Them:Scheme2A}, both receivers apply compress-forward. If one of the two receivers uses a hybrid relaying strategy that combines partially decode-forward and compress-forward, we obtain a new achievable region below. 

\begin{Theorem}\label{Them:Scheme2B}
 For the \emph{fully} cooperative  BRC with {two-sided} and rate-limited feedback, 
the  capacity region includes the set $\set{R}^{(1)}_3$ of all nonnegative rate tuples $(R_0,R_1,R_2)$ that satisfy 
 \begin{subequations}
 \begin{IEEEeqnarray}{rCl}
R_0&\leq &I(U_0;Y_1|X_1,X_2)+\Delta\\
R_0+R_1&\leq &I(U_0;Y_1|X_1,X_2)+\Delta+I_1 \\
R_0+R_2&\leq &I(U_0;Y_1|X_1,X_2)+\Delta+I_2 \\
R_0+R_2 &\leq &  I(U_0,U_2,X_1;Y_2|X_2)\\
&&- I(\hat{Y}_1;Y_1|U_0,U_2,X_1,X_2,Y_2)\\
R_0+R_1+R_2&\leq & I(U_0;Y_1|X_1,X_2)+I(U_0,U_2,X_1;Y_2|X_2)\nonumber\\
&&+\Delta- I(\hat{Y}_1;Y_1|U_0,U_2,X_1,X_2,Y_2)\nonumber\\
&&-I(U_1;U_2|U_0,X_1,X_2)\quad \\
R_0+R_1+R_2&\leq & I(U_0;Y_1|X_1,X_2)+\Delta\nonumber\\
&&+I_1+I_2-I(U_1;U_2|U_0,X_1,X_2)
\end{IEEEeqnarray}
for some pmf $P_{X_1}P_{X_2}{ P_{U_0U_1U_2|X_1X_2}}P_{\hat{Y}_1|U_0X_1X_2Y_1}P_{\hat{Y}_2|X_2Y_2}$ and function $X=f(U_0,U_1,U_2)$ such that 
\begin{IEEEeqnarray}{rCl}
&&I(\hat{Y}_1;Y_1|U_0,X_1,X_2,Y_{2})\leq R_{\text{fb},1}\\
&&I(\hat{Y}_2;Y_2|U_0,X_1,X_2,Y_{1})\leq R_{\text{fb},2}
\end{IEEEeqnarray}
 where
\begin{IEEEeqnarray}{rCl} 
&&\Delta=\min\{0,I(X_{2};Y_1|X_1)\!-\!I(\hat{Y}_{2};Y_{2}|U_0,X_1,X_2,Y_1)\}\nonumber\\
&&I_1=I(U_1;\hat{Y}_2,Y_1|U_0,X_1,X_2)\nonumber\\
&&\qquad+\min\{0,R_{\text{fb},2}\!-\!I(\hat{Y}_2;Y_2|U_0,X_1,X_2,Y_1)\} \nonumber\\
&&I_2=I(U_2;\hat{Y}_1,Y_2|U_0,X_1,X_2)\nonumber\\
&&\qquad+\min\{0,R_{\text{fb},1}\!-\!I(\hat{Y}_1;Y_1|U_0,X_1,X_2,Y_2)\}.\nonumber
\end{IEEEeqnarray}
  \end{subequations}

 \end{Theorem}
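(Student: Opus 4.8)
We only sketch the scheme, which combines the block-Markov construction of Scheme~1 (Theorem~\ref{Them:Scheme1}) with the two-sided compress-forward structure of Scheme~2A (Theorem~\ref{Them:Scheme2A}): Receiver~1 acts as a \emph{hybrid} relay, first partially decoding the cloud center $U_0$ and then using it as side information to refine its description $\hat{Y}_1$ of its own output, while Receiver~2 performs plain compress-forward. Fix a pmf $P_{X_1}P_{X_2}P_{U_0U_1U_2|X_1X_2}P_{\hat{Y}_1|U_0X_1X_2Y_1}P_{\hat{Y}_2|X_2Y_2}$ and a function $X=f(U_0,U_1,U_2)$, and transmit over $B$ blocks of length $n$, carrying $B-1$ fresh message blocks so that the rate loss vanishes as $B\to\infty$. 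In each block we generate: sequences $x_1^n(l_1)$, $l_1\in[1:2^{nR_{\text{fb},1}}]$, and $x_2^n(l_2)$, $l_2\in[1:2^{nR_{\text{fb},2}}]$, i.i.d.\ according to $P_{X_1}$ and $P_{X_2}$ (here $l_1,l_2$ are the feedback/relay indices produced in the previous block); a cloud center $u_0^n(m_0\mid l_1,l_2)$ superimposed on each $(x_1^n,x_2^n)$; a Marton pair of codebooks $u_1^n(m_1,t_1\mid\cdot)$ and $u_2^n(m_2,t_2\mid\cdot)$ superimposed on each $u_0^n$, with binning rates summing to just above $I(U_1;U_2|U_0,X_1,X_2)$; and two Wyner--Ziv compression codebooks $\hat{y}_1^n(k_1\mid u_0^n,l_1,l_2)$ and $\hat{y}_2^n(k_2\mid l_2)$ of rates just above $I(\hat{Y}_1;Y_1|U_0,X_1,X_2)$ and $I(\hat{Y}_2;Y_2|X_2)$, randomly partitioned into $2^{nR_{\text{fb},1}}$ and $2^{nR_{\text{fb},2}}$ bins, respectively.

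\emph{Encoding.} In block $b$ the transmitter already knows $l_1^{(b)},l_2^{(b)}$, which arrived over the feedback links at the end of block $b-1$, and hence knows $x_1^n,x_2^n$ --- this is the ``delicate design'' that lets it superimpose $u_0^n,u_1^n,u_2^n$ on the receivers' inputs. It forms $u_0^n$, picks a jointly typical Marton pair $(u_1^n,u_2^n)$, and sends $x^n=f(u_0^n,u_1^n,u_2^n)$. Receiver~1 decodes the cloud center $u_0^n$, then finds $\hat{y}_1^n(k_1^{(b)}\mid u_0^n,l_1^{(b)},l_2^{(b)})$ jointly typical with $y_1^{n,(b)}$, sets $l_1^{(b+1)}$ to the bin of $k_1^{(b)}$, feeds it back to the transmitter and relays it as $x_1^n(l_1^{(b+1)})$ in block $b+1$. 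Receiver~2 similarly finds $\hat{y}_2^n(k_2^{(b)}\mid l_2^{(b)})$ typical with $y_2^{n,(b)}$, sets $l_2^{(b+1)}$ to its bin, and feeds it back and relays it.

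\emph{Decoding.} Receiver~2 decodes \emph{backwards}: when it processes block $b$ it already knows $l_1^{(b+1)}$, the bin of $k_1^{(b)}$, from block $b+1$; it resolves $k_1^{(b)}$ inside that bin using $y_2^{n,(b)}$ as Wyner--Ziv side information (feasible precisely under $I(\hat{Y}_1;Y_1|U_0,X_1,X_2,Y_2)\le R_{\text{fb},1}$), reconstructs $\hat{y}_1^{n,(b)}$, and jointly decodes $(m_0^{(b)},m_2^{(b)})$ from $\big(y_2^{n,(b)},\hat{y}_1^{n,(b)}\big)$; this yields the bound involving $I(U_0,U_2,X_1;Y_2|X_2)-I(\hat{Y}_1;Y_1|U_0,U_2,X_1,X_2,Y_2)$ and its companions, exactly as in the analysis of Receiver~2 in Theorem~\ref{Them:Scheme1}. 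Receiver~1 decodes $(m_0^{(b)},m_1^{(b)})$ by a \emph{sliding window} over blocks $b$ and $b+1$: from the block-$(b+1)$ output it decodes $l_2^{(b+1)}$, the bin of $k_2^{(b)}$ carried by $x_2^n$ in block $b+1$, refines $k_2^{(b)}$ with $y_1^{n,(b)}$, reconstructs $\hat{y}_2^{n,(b)}$, and jointly decodes $(m_0^{(b)},m_1^{(b)})$ from $\big(y_1^{n,(b)},\hat{y}_2^{n,(b)}\big)$. The quantities $\Delta,I_1,I_2$ collect the net rates left by these hybrid/sliding-window steps: $I(U_0;Y_1|X_1,X_2)$ is the room for the decode-forward part; the term $I(X_2;Y_1|X_1)-I(\hat{Y}_2;Y_2|U_0,X_1,X_2,Y_1)$ inside $\Delta$ is what is left after Receiver~1 also has to disambiguate $x_2^n$ of the adjacent block; and $\min\{0,R_{\text{fb},2}-I(\hat{Y}_2;Y_2|U_0,X_1,X_2,Y_1)\}$ in $I_1$ and $\min\{0,R_{\text{fb},1}-I(\hat{Y}_1;Y_1|U_0,X_1,X_2,Y_2)\}$ in $I_2$ are the losses incurred when a feedback link is too thin to resolve the relayed description, the $\min\{0,\cdot\}$ making them inactive whenever the link already suffices.

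\emph{Conclusion.} A standard error analysis --- the covering lemma for the Marton pair and for the two compression codebooks, the packing lemma for every decoding step above, and the Wyner--Ziv argument for resolving the binned descriptions --- shows that the average error probability vanishes provided the auxiliary rates $\hat{R}_1,\hat{R}_2,\tilde{R}_1,\tilde{R}_2$ and the coding slacks satisfy a finite system of linear inequalities that includes the two feedback constraints $I(\hat{Y}_1;Y_1|U_0,X_1,X_2,Y_2)\le R_{\text{fb},1}$ and $I(\hat{Y}_2;Y_2|U_0,X_1,X_2,Y_1)\le R_{\text{fb},2}$. Eliminating the auxiliary rates by Fourier--Motzkin and then letting $B\to\infty$ and $\varepsilon\downarrow0$ leaves exactly the region $\set{R}^{(1)}_3$. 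The main difficulty is the bookkeeping at Receiver~1: because its Marton and cloud codewords sit on top of the two relay inputs $x_1^n,x_2^n$, which themselves depend on compression indices generated in the neighboring blocks, one must check carefully that the joint-typicality events in the sliding-window decoder are governed by the claimed conditional mutual informations, and in particular that the two routes by which Receiver~1 recovers Receiver~2's description --- through the relayed $x_2^n$ and through the feedback-forwarded codewords --- combine to produce precisely the stated $\Delta,I_1,I_2$ with their $\min\{0,\cdot\}$ form.
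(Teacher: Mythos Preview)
Your sketch captures the high-level split between a sliding-window Receiver~1 with conditional compression and a backward-decoding Receiver~2, but it misses the ingredient that makes Scheme~2B a \emph{hybrid} strategy in the paper's sense. In the paper, Receiver~1's relay input is $x_{1,b}^n(\mathbf{m}_{c,b-1},m_{\text{fb},1,b-1})$: it carries the \emph{decoded common message} of the previous block in addition to the compression index. Your $x_1^n(l_1)$ carries only the compression bin. This partial decode-forward of the cloud center through $X_1$ is precisely what produces the constraint $R_0+R_2\le I(U_0,U_2,X_1;Y_2|X_2)-I(\hat{Y}_1;Y_1|U_0,U_2,X_1,X_2,Y_2)$, in which the $X_1$ term reflects that Receiver~2 recovers $\mathbf{m}_{c,b-1}$ through the $X_1\to Y_2$ relay link. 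Your appeal to ``exactly as in the analysis of Receiver~2 in Theorem~\ref{Them:Scheme1}'' therefore breaks down, since in Theorem~\ref{Them:Scheme1} the sequence $x_{1,b}^n(\mathbf{m}_{c,b-1},m_{\text{fb},1,b-1})$ likewise carries the common message, whereas in your construction it does not.

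A second, related gap is your decoding order at Receiver~2: you have it first resolve $k_1^{(b)}$ inside its Wyner--Ziv bin and only afterwards decode $(m_0^{(b)},m_2^{(b)})$. But your compression codebook $\hat{y}_1^n(k_1\mid u_0^n,l_1,l_2)$ is indexed by $m_0^{(b)}$ (through $u_0^n$) and by $l_1^{(b)}$, neither of which Receiver~2 yet knows for block~$b$; the sequential step is thus not well-posed. The paper instead has Receiver~2 perform one joint typicality test for $(\hat{\mathbf{m}}^{(2)}_{c,b-1},\hat{m}_{\text{fb},1,b-1},\hat{m}_{p,2,b},\hat{v}_{2,b},\hat{j}_{1,b})$, yielding the packing constraint $R_c+\hat{R}_1+R_{p,2}+R'_2+\tilde{R}_1<I(\hat{Y}_1;Y_2,U_2|U_0,X_1,X_2)+I(U_0,U_2,X_1;Y_2|X_2)$, which after Fourier--Motzkin gives the stated bound on $R_0+R_2$.
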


\begin{proof}
See Section \ref{sec:scheme2B}
\end{proof}

\begin{Remark}
 The region $\set{R}^{(2)}_3$ is also achievable by exchanging indices $1$ and $2$  in the above definition of $\set{R}^{(1)}_3$. The convex hull of the union of   $\set{R}^{(1)}_3$ and $\set{R}^{(2)}_3$ leads to a potentially larger rate region.
\end{Remark}





\section{Comparisons among $\set{R}_1$, $\set{R}_2$, $\set{R}_{\textnormal{Liang}}$ and $\set{R}_\textnormal{Wu}$}\label{Sec:compare}
We compare our  regions $\set{R}_1$, $\set{R}_2$ with the known rate regions   $\set{R}_{\textnormal{Liang}}$ and $\set{R}_\textnormal{Wu}$. Note that $\set{R}_{\textnormal{Liang}}$ is for RBCs without feedback and $\set{R}_\textnormal{Wu}$ is for BCs with feedback, while  $\set{R}_1$ and $\set{R}_2$ include both feedback communication and relay cooperation.  

 \subsection{ $\set{R}_{\textnormal{Liang}}$ versus $\set{R}_1$}\label{sec:Rliang}
 
 In Remark \ref{Remark:Liang}, it showed that our rate region $\set{R}_1$ includes Liang and Kramer's region $\set{R}_{\text{Liang}}$. In this subsection, we will prove that  when the feedback rate is sufficiently large,  this inclusion is strict for some channels, i.e. 
 \begin{IEEEeqnarray}{rCl}\label{eq:relationLiang}
 \set{R}_{\text{Liang}}\subset \set{R}_1.
\end{IEEEeqnarray}

Suppose $R_0=0$ for simplicity. To prove  \eqref{eq:relationLiang}, in view of Remark \ref{Remark:Liang}, it's sufficient to show  there exists some rate pair $(R^*_1,R^*_2)$ in $\set{R}_1$ lying strictly outside of $\set{R}_{\text{Liang}}$. 
Consider the corner point $(0,R^*_{2,\text{Liang}})$ on the boundary of  $\set{R}_{\text{Liang}}$ in  \eqref{rate:Liang}, where the transmitter spends all power to send message $M_2$ to Receiver 2, i.e., $U_1=
\emptyset$ and $U_2=X$. Thus, we have
\begin{subequations}\label{eq:regionDistDF}
\begin{IEEEeqnarray}{rCl}
R^*_{2,\text{Liang}}&\leq& I(X,X_1;Y_2)\\
R^*_{2,\text{Liang}}&\leq& I(U_0;Y_1|X_1)+I(X_1;Y_2|X_1,U_0) 
\end{IEEEeqnarray}
 for some pmf  $P_{XX_1U}$, which is the  partial decode-forward lower bound of relay channel \cite{Cover'79}. 
\end{subequations}

Now consider $\set{R}_1$ in \eqref{eq:regionSch1}. Let $R_0=R_1=0$ and $U_1=\emptyset$ and $U_2=X$,  then the marginal rate ${R}_2$ is achievable if 
 \begin{subequations}\label{eq:regionCFDF}
\begin{IEEEeqnarray}{rCl} 
R^*_{2,\text{Scheme1}}&\leq& I(X,X_1;Y_2) -I(\hat{Y}_1;Y_1|U_0,X,X_1,Y_2) \quad \\
R^*_{2,\text{Scheme1}}&\leq& I(U_0;Y_1|X_1)+I(X; \hat{Y}_1,Y_2|U_0,X_1)
\end{IEEEeqnarray}
for some pmf ${ P_{U_0X_1X}}P_{\hat{Y}_1|U_0X_1Y_1}$ satisfying 
\begin{IEEEeqnarray}{rCl} \label{eq:feedback}
&&I(\hat{Y}_1;Y_1|U_0,X_1)\leq R_{\text{fb},1}.
\end{IEEEeqnarray} 
 \end{subequations}
If feedback rate is sufficiently large such that  rate constraint \eqref{eq:feedback} is inactive, then \eqref{eq:regionCFDF} turns to be Gabbai and Bross's rate in \cite[Theorem 3]{Gabbai'06}. In their work, they evaluated the rates \eqref{eq:regionDistDF} and  \eqref{eq:regionCFDF} for  the  Gaussian  and $Z$ relay channels, and showed that 
$R^*_{2,\text{Scheme1}}>R^*_{2,\text{Liang}}$. In view of this fact and from Remark \ref{Remark:Liang}, we have
 \begin{Corollary}
$\set{R}_{\text{Liang}}\subset \set{R}_1$ holds when $R_{\text{fb},1}$ satisfies \eqref{eq:feedback}. 
\end{Corollary}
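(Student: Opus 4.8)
The plan is to establish $\set{R}_{\text{Liang}}\subset \set{R}_1$ by exhibiting a single rate pair that is achievable in $\set{R}_1$ but not in $\set{R}_{\text{Liang}}$, invoking the already-established inclusion $\set{R}_{\text{Liang}}\subseteq \set{R}_1$ from Remark~\ref{Remark:Liang}. Fixing $R_0=0$, the natural candidate is the marginal rate $R_2$ achieved at the corner point where the transmitter devotes all resources to Receiver~2, i.e.\ setting $U_1=\emptyset$ and $U_2=X$. First I would specialize $\set{R}_{\text{Liang}}$ to this choice, which collapses the five inequalities in \eqref{rate:Liang} to the two constraints in \eqref{eq:regionDistDF}; these are exactly the partial-decode-forward lower bound for the relay channel with relay $X_1$ observing $Y_1$ and destination $Y_2$. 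Then I would specialize $\set{R}_1$ to the same choice with, in addition, $R_1=0$ and a nontrivial compression random variable $\hat{Y}_1$ subject to the feedback constraint \eqref{eq:feedback}; under this specialization the five inequalities in \eqref{eq:regionSch1} collapse to the two constraints in \eqref{eq:regionCFDF}.

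Next, I would observe that if the feedback rate $R_{\text{fb},1}$ is taken large enough that \eqref{eq:feedback} is inactive (one can always choose $\hat{Y}_1$ with bounded alphabet, so $I(\hat{Y}_1;Y_1|U_0,X_1)$ is bounded and a finite $R_{\text{fb},1}$ suffices), then the region described by \eqref{eq:regionCFDF} is precisely the combined compress-forward/partial-decode-forward lower bound of Gabbai and Bross \cite[Theorem~3]{Gabbai'06} for the relay channel. At this point I would cite their evaluation: for the Gaussian relay channel and the $Z$ relay channel, they show $R^*_{2,\text{Scheme1}}>R^*_{2,\text{Liang}}$, i.e.\ the hybrid bound strictly dominates the pure partial-decode-forward bound. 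Hence the rate pair $(0,R^*_{2,\text{Scheme1}})$ lies in $\set{R}_1$ but strictly outside $\set{R}_{\text{Liang}}$, which together with Remark~\ref{Remark:Liang} gives the strict inclusion $\set{R}_{\text{Liang}}\subset \set{R}_1$ and proves the Corollary.

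The main obstacle is the reduction step: one must verify carefully that with the substitutions $U_1=\emptyset$, $U_2=X$, $R_0=R_1=0$, the five-inequality description of $\set{R}_1$ genuinely reduces to the two inequalities \eqref{eq:regionCFDF}, and likewise that $\set{R}_{\text{Liang}}$ reduces to \eqref{eq:regionDistDF}. This requires checking that the mutual-information terms involving $U_1$ vanish (since $U_1=\emptyset$), that the terms $I(U_1;U_2|U_0,X_1)$ drop out, and that the remaining binding constraints are exactly the two listed; the inequalities that become dominated (e.g.\ the $2R_0+R_1+R_2$ bound when $R_0=R_1=0$) must be shown to be implied by the two retained ones or otherwise non-binding. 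The remaining work—the strict separation $R^*_{2,\text{Scheme1}}>R^*_{2,\text{Liang}}$—is not re-derived here but imported wholesale from \cite{Gabbai'06}, so no fresh computation is needed beyond confirming that our specialized expressions coincide with theirs under the large-feedback-rate assumption.
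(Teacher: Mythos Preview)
Your proposal is correct and follows essentially the same argument as the paper: invoke Remark~\ref{Remark:Liang} for the inclusion $\set{R}_{\text{Liang}}\subseteq\set{R}_1$, specialize both regions at $R_0=R_1=0$, $U_1=\emptyset$, $U_2=X$ to obtain \eqref{eq:regionDistDF} and \eqref{eq:regionCFDF}, identify the latter (when \eqref{eq:feedback} is slack) with the Gabbai--Bross bound \cite[Theorem~3]{Gabbai'06}, and import their strict separation $R^*_{2,\text{Scheme1}}>R^*_{2,\text{Liang}}$ for the Gaussian and $Z$ relay channels. Your added discussion of verifying that the five-inequality systems collapse to the two retained constraints is a welcome elaboration but not a departure from the paper's line of reasoning.
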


 \subsection{ $\set{R}_{\textnormal{Wu}}$ versus $\set{R}_2$}
  Remark \ref{Remark:WU} states that  $\set{R}_{\text{Wu}}\subseteq \set{R}_2$.
Here we  prove that $\set{R}_{\text{Wu}}\subset \set{R}_2$.  To prove the strict inclusion, we  follow  similar procedures  in Section \ref{sec:Rliang} and show that there exists some rate pair $(R^*_1,R^*_2)$ inside $\set{R}_2$   lying strictly outside of $ \set{R}_{\text{Wu}}$. 

Consider the corner point $(0,R^*_{2,\text{Wu}})$ on the boundary of $\set{R}_{\text{Wu}}$. From \eqref{eq:region_relay}, it's easy to check that  
\begin{IEEEeqnarray}{rCl}
R^*_{2,\text{Wu}}&\leq& I(X;Y_2) \label{eq:regionWu1}
\end{IEEEeqnarray}
 for some  $P_{X}$,
which is the  capacity of the link from the transmitter to Receiver 2.

Now consider $\set{R}_2$ in \eqref{eq:Scheme2A}. Let $R_0=R_1=0$ and $U_0=U_1=\hat{Y}_2=\emptyset$,  then the marginal rate ${R}_2$ is achievable if  
\begin{subequations}\label{eq:CF}
\begin{IEEEeqnarray}{rCl}
R^*_{2,\text{CF}}\leq &&I(X;\hat{Y}_1,Y_2|X_1)\\
R^*_{2,\text{CF}}\leq && I(X,X_1;Y_2)\!-\!I(\hat{Y}_1;Y_1|X,X_1,\!Y_2)
\end{IEEEeqnarray}
\end{subequations}
for some pmf $P_{X}P_{X_1}P_{\hat{Y}_1|X_1Y_1}$, which is  the compress-forward lower bound of the relay channel \cite{Cover'79}. It's well known that introducing a compress-forward relay to the point-to-point channel, such as Gaussian channel, can strictly increase the capacity \eqref{eq:regionWu1}. 
Thus, we have  
 \begin{Corollary}
$\set{R}_{\text{Wu}}\subset \set{R}_2$. 
\end{Corollary}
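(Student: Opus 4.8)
The plan is to establish the strict inclusion $\set{R}_{\text{Wu}}\subset\set{R}_2$ by exhibiting a concrete point achievable in $\set{R}_2$ but not in $\set{R}_{\text{Wu}}$, exactly paralleling the argument of Section \ref{sec:Rliang}. First I would invoke Remark \ref{Remark:WU}, which already gives $\set{R}_{\text{Wu}}\subseteq\set{R}_2$, so only strictness needs proof. To this end I would restrict attention to the slice $R_0=R_1=0$ and compare the maximal achievable $R_2$ under each region. On the $\set{R}_{\text{Wu}}$ side, setting $U_1=U_2=X$ (all resources to Receiver~2) and noting that $\hat Y_1,\hat Y_2$ can only help the \emph{other} receiver decode, the corner point collapses to $R^*_{2,\text{Wu}}\le I(X;Y_2)$ for some $P_X$, i.e.\ the plain point-to-point capacity of the transmitter–Receiver~2 link with no help from Receiver~1.

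Next, on the $\set{R}_2$ side, I would specialize \eqref{eq:Scheme2A} with $R_0=R_1=0$, $U_0=U_1=\hat Y_2=\emptyset$, $U_2=X$. With these choices $\Delta_1$ drops out (the $R_1$-bounds are vacuous) and $\Delta_2$ is either $0$ or the compress-forward penalty; the two relevant surviving inequalities become $R^*_{2,\text{CF}}\le I(X;\hat Y_1,Y_2|X_1)$ and $R^*_{2,\text{CF}}\le I(X,X_1;Y_2)-I(\hat Y_1;Y_1|X,X_1,Y_2)$, subject to $I(\hat Y_1;Y_1|X_1)\le R_{\text{fb},1}$, over pmfs $P_XP_{X_1}P_{\hat Y_1|X_1Y_1}$. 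I would point out that when $R_{\text{fb},1}$ is large enough that the feedback constraint is slack, this is precisely the compress-forward lower bound for the relay channel with Receiver~1 as a compress-forward relay and Receiver~2 as destination. Since it is classical (e.g.\ for the Gaussian relay channel) that adding a compress-forward relay strictly enlarges $I(X;Y_2)$, one gets $R^*_{2,\text{CF}}>R^*_{2,\text{Wu}}$ for such channels, and hence a rate tuple $(0,0,R^*_{2,\text{CF}})\in\set{R}_2\setminus\set{R}_{\text{Wu}}$, giving the strict inclusion. Combined with Remark \ref{Remark:WU}, this yields the Corollary.

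The only delicate point — and the part I would be most careful about — is verifying that the claimed specialization of \eqref{eq:Scheme2A} really does reduce cleanly to \eqref{eq:CF}: one must check that after the substitutions $U_0=U_1=\hat Y_2=\emptyset$ and $U_2=X$ the mutual information and $\Delta_i$ terms simplify as stated, that the second feedback constraint $I(\hat Y_2;Y_2|X_2)\le R_{\text{fb},2}$ becomes trivial, and that the $\min\{0,\cdot\}$ structure of $\Delta_2$ does not weaken the bound below \eqref{eq:CF} — indeed, since $\Delta_2\le 0$ appears additively on the right side, it can only hurt, so one actually needs the (true) fact that the second bound in \eqref{eq:CF} already incorporates the compress-forward penalty and remains $>I(X;Y_2)$ for suitable joint distributions. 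Beyond that bookkeeping, everything rests on the well-known strict suboptimality of direct transmission versus compress-forward relaying, which I would cite rather than reprove.
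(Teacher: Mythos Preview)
Your proposal is correct and follows essentially the same route as the paper: invoke Remark~\ref{Remark:WU} for the inclusion, then show strictness by comparing the marginal corner point $(0,0,R_2^*)$ in both regions, bounding $R^*_{2,\text{Wu}}\le\max_{P_X}I(X;Y_2)$ and specializing $\set R_2$ via $U_0=U_1=\hat Y_2=\emptyset$, $U_2=X$ to recover the compress-forward relay bound \eqref{eq:CF}, which is classically known to exceed $I(X;Y_2)$ for e.g.\ the Gaussian relay channel. Your treatment is in fact slightly more careful than the paper's (you track the feedback constraint $I(\hat Y_1;Y_1|X_1)\le R_{\text{fb},1}$ and verify the $\Delta_2$ bookkeeping explicitly, both of which the paper glosses over); the one cosmetic slip is that ``setting $U_1=U_2=X$'' is not how one \emph{upper}-bounds $R^*_{2,\text{Wu}}$---that bound must hold for \emph{all} admissible auxiliaries and follows instead from $(U_0,U_2)\to X\to Y_2$ together with $I(U_0,U_2;\hat Y_1|Y_2)\le I(\hat Y_1;Y_1|Y_2)$ in \eqref{In2R2}---but your stated conclusion is correct.
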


\subsection{Example}
Consider the Gaussian relay broadcast channel with perfect feedback from  Receiver 1 to the transmitter, see Fig. \ref{fig:relayFb}. The channel outputs are: 
\begin{IEEEeqnarray*}{rCl}
Y_1&=&g_{01}X+Z_1,\nonumber\\
Y_2&=&g_{02}X+g_{12}X_1+Z_2
\end{IEEEeqnarray*}
where $g_{01}$, $g_{02}$ and  $g_{12}$  are channel gains, and $Z_1\sim\set{N}(0,1)$ and $Z_2\sim\set{N}(0,1)$  are independent Gaussian noise variables. The  input power constraints  are $\mathbb{E}|X^2|\leq P$ and $\mathbb{E}|X^2_1|\leq P_1$.

\begin{figure}
\centering
\includegraphics[width=0.45\textwidth]{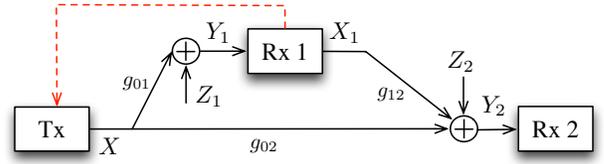}
\caption{Gaussian RBC with relay-transmitter feedback} \label{fig:relayFb}
\end{figure}

Table \ref{tab:compareRates} compares  $R^*_{2,\text{Liang}}$, $R^*_{2,\text{Scheme1}}$, $R^*_{2,\text{Wu}}$,  and $R^*_{2,\text{CF}}$, see \eqref{eq:regionDistDF}--\eqref{eq:CF}, for this channel with $g_{01}=1/d, g_{02}=1$, $g_{12}=1/|1-d|$, and  $P=5, P_1=1$. It can be seen that $R^*_{2,\text{Scheme1}}> R^*_{2,\text{CF}}>R^*_{2,\text{Liang}}>R^*_{2,\text{Wu}}$, which means that our rate regions $\set{R}_1$ and $\set{R}_2$ can strictly improve on $\set{R}_\text{Liang}$ and $\set{R}_{\text{Wu}}$, respectively.

\begin{table}
\begin{center}
\caption{Margnial rate $R^*_2$ achieved by   various coding schemes }
\begin{tabular}{lcccccc}
\toprule
$d$  & $R^*_{2,\text{Liang}}$ &  $R^*_{2,\text{Scheme1}}$  & $R^*_{2,\text{Wu}}$ & $R^*_{2,\text{CF}}$\\
\midrule
0.73& 1.6881 & 1.7069& 1.2925  & 1.6908  \\
0.74& 1.6703 & 1.7111 & 1.2925 & 1.6971  \\
0.75& 1.6529 & 1.7153 & 1.2925 & 1.7033 \\
0.76& 1.6358 & 1.7195 & 1.2925 & 1.7094\\
\bottomrule
 \label{tab:compareRates}
\end{tabular}
\end{center}
\end{table}

\section{Coding scheme for partially   cooperative BRCs with rate-limited feedback}\label{Sec:ParRBC}
In this section we present a block-Markov coding scheme for partially   cooperative BRCs with relay-transmitter and rate-limited feedback. Assume only Receiver 1 relays cooperative information $X_1$ without loss of generality. In the transmission,  a sequence of $B$ i.i.d message tuples $(m_{0,b},m_{1,b},m_{2,b})$, $b\in[1:B]$, are sent over $B+1$ blocks, each consisting of $n$ transmissions. 

Split  message $m_{k,b}$ into common and private parts: $m_{k,b}=(m_{c,k,b},m_{p,k,b})$, where $m_{c,k,b}\in[1:2^{nR_{c,k}}]$, $m_{p,k,b}\in[1:2^{nR_{p,k}}]$ and $R_k=R_{c,k}+R_{p,k}$. Define  $\textbf{m}_{c,b}:=(m_{0,b},m_{c,1,b}, m_{c,2,b})$ and $R_c:=R_0+R_{c,1}+R_{c,2}$.

In each block $b\in[1:B+1]$, after obtaining  feedback message $m_{\text{fb},1,b-1}$, the transmitter uses Marton's coding to  send $(\textbf{m}_{c,b}, \textbf{m}_{c,b-1}, m_{\text{fb},1,b-1})$ in the cloud centre $u_{0,b}^n$, and $m_{p,1,b}, m_{p,2,b}$ in two different satellites $u^n_{1,b},u^n_{2,b}$, respectively. Receiver 1  first jointly decodes $(\textbf{m}_{c,b}, {m}_{p,1,b})$, and then  compress its channel outputs $y^n_{1,b}$. Finally, it sends the compression message $m_{\text{fb},1,b}$ as feedback information  and   $x^n_{1,b+1}(\textbf{m}_{c,b},m_{\text{fb},1,b})$ as channel inputs in  next bock.   Receiver 2  uses backward decoding to jointly decode  $(\textbf{m}_{c,b-1}, {m}_{p,2,b},m_{\text{fb},1,b-1})$.  
 Note that  the transmitter knows  $(\textbf{m}_{c,b-1}, m_{\text{fb},1,b-1})$, from which it can reconstruct  Receiver 1's input $x^n_{1,b}$, thus  we superimpose $(u^n_{0,b},u^n_{1,b},u^n_{2,b})$ on  $x^n_{1,b}$ that attains cooperation between the transmitter and Receiver 1.
Coding is explained with the help of Table \ref{tab:Scheme1}.

\begin{table*}
\begin{center}
\caption{ Scheme 1  for partially cooperative BRCs with rate-limited feedback}
\begin{tabular}{>{\bfseries}ccccccc}
\toprule
Block &1& 2 & $\ldots$ & $b$ &$\cdots$\\
\midrule
$X_1$& $x_{1,1}^n(1,1)$ & $x_{1,2}^n(\textbf{m}_{c,1},m_{\text{fb},1,1})$ & $\ldots$   & $x_{1,b}^n(\textbf{m}_{c,b-1},m_{\text{fb},1,b-1})$  & $\ldots$ \\
$U_0$  & $u^n_{0,1}(\textbf{m}_{c,1}|1,1)$ & $u_{0,2}^n( \textbf{m}_{c,2}|\textbf{m}_{c,1},{m}_{\text{fb},1,1})$ & $\ldots$  &  $u_{0,b}^n( \textbf{m}_{c,b}|\textbf{m}_{c,b-1},{m}_{\text{fb},1,b-1})$   & $\cdots$\\

$U_k$ 	        & $u^n_{k,1}(m_{p,k,1},v_{k,1}|\textbf{m}_{c,1},1,1)$  & $u_{k,2}^n( m_{p,k,2},v_{k,2}|\textbf{m}_{c,2},\textbf{m}_{c,1},{m}_{\text{fb},1,1})$ & $\ldots$   &    $u_{k,b}^n( m_{p,k,b},v_{k,b}|\textbf{m}_{c,b},\textbf{m}_{c,b-1},{m}_{\text{fb},1,b-1})$ &     $\cdots$\\  

$\hat{Y}_1$      &  $\hat{y}_{1,1}^n(m_{\text{fb},k,1}|\textbf{m}_{c,1},1,1)$ & $\hat{y}_{1,2}^n(m_{\text{fb},1,2}|\textbf{m}_{c,2},\textbf{m}_{c,1}, m_{\text{fb},1,1})$ & $\ldots$   &    $\hat{y}_{1,b}^n(m_{\text{fb},1,b}|\textbf{m}_{c,b},\textbf{m}_{c,b-1}, m_{\text{fb},1,b-1})$ &     $\cdots$\\  
\midrule
${Y}_1$	& $(\hat{\textbf{m}}^{(1)}_{c,1},\hat{m}_{p,1,1},\hat{v}_{1,1})$    & $ (\hat{\textbf{m}}^{(1)}_{c,2},\hat{m}_{p,1,2},\hat{v}_{1,2})\rightarrow$      & $\ldots$ & $ (\hat{\textbf{m}}^{(1)}_{c,b},\hat{m}_{p,1,b},\hat{v}_{1,b})\rightarrow$    &         $\cdots$\\ 

${Y}_2$	& $(\hat{m}_{p,2,1},\hat{v}_{2,1})$     & $\leftarrow (\hat{\textbf{m}}^{(2)}_{c,1},\hat{m}_{p,2,2},\hat{v}_{2,2},\hat{m}_{\text{fb},1,1})$      & $\ldots$ & $\leftarrow (\hat{\textbf{m}}^{(2)}_{c,b\!-\!1},\hat{m}_{p,2,b},\hat{v}_{2,b},\hat{m}_{\text{fb},1,b-1})$    &         $\cdots$\\ 
\bottomrule
 \label{tab:Scheme1}
\end{tabular}
\end{center}
\end{table*}

\subsubsection{Code construction}
Fix pmf ${ P_{U_0U_1U_2X_1}}P_{\hat{Y}_1|U_0X_1Y_1}$ and a function $X=f(U_0,U_1,U_2)$.  For each block $b\in[1:B+1]$, randomly and independently generate $2^{n(R_c+\hat{R}_1)}$ sequences $x_{1,b}^n(\textbf{m}_{c,b\!-\!1},m_{\text{fb},1,b\!-\!1})\sim \prod^n_{i=1}P_{X_1}(x_{1,b,i})$,  $\textbf{m}_{c,b\!-\!1}\in[1:2^{nR_c}]$ and $m_{\text{fb},1,b-1}\in[1:2^{n\hat{R}_1}]$. For each $(\textbf{m}_{c,b\!-\!1},m_{\text{fb},1,b\!-\!1})$, randomly and independently generate $2^{n{R}_c}$ sequences $u_{0,b}^n( \textbf{m}_{c,b}|\textbf{m}_{c,b\!-\!1},{m}_{\text{fb},1,b-1})\sim \prod^n_{i=1}P_{U_0|X_1}(u_{0,b,i}|x_{1,b,i})$.  For each $(\textbf{m}_{c,b},\textbf{m}_{c,b-1},{m}_{\text{fb},1,b\!-\!1})$, randomly and independently generate $2^{n(
{R}_{p,k}+{R}'_k)}$ sequences $u_{k,b}^n( m_{p,k,b},v_{k,b}|\textbf{m}_{c,b},\textbf{m}_{c,b-1},{m}_{\text{fb},1,b-1})\sim \prod^n_{i=1}P_{U_k|U_0X_1}(u_{k,b,i}|u_{0,b,i},x_{1,b,i})$, $m_{p,k,b}\in[1:2^{n{R}_{p,k}}]$ and $v_{k,b}\in[1:2^{n{R}'_k}]$.   For each $(\textbf{m}_{c,b},\textbf{m}_{c,b-1},m_{\text{fb},1,b-1})$, randomly and independently generate $2^{n\hat{R}_1}$ sequences $\hat{y}_{1,b}^n(m_{\text{fb},1,b}|\textbf{m}_{c,b},\textbf{m}_{c,b\!-\!1}, m_{\text{fb},1,b\!-\!1}) \sim \prod^n_{i=1}P_{\hat{Y}_1|U_0X_1}(\hat{y}_{1,b,i}|u_{0,b,i},x_{1,b,i})$.

\subsubsection{Encoding}
  In each block $b\in[1:B+1]$, assume that the transmitter already knows ${{m}}_{\text{fb},1,b-1}$ through  feedback link.  It first looks for a pair of indices $(v_{1,b},v_{2,b})$ such that
\begin{IEEEeqnarray}{rCl}
&&\big( u^n_{1,b}(m_{p,1,b},v_{1,b}|\textbf{m}_{c,b},{\textbf{m}_{c,b-1}},{m}_{\text{fb},1,b\!-\!1}),\nonumber\\
&&~  u_{0,b}^n( \textbf{m}_{c,b}|\textbf{m}_{c,b\!-\!1},{m}_{\text{fb},1,b-1}),x_{1,b}^n(\textbf{m}_{c,b-1},m_{\text{fb},1,b-1}),\nonumber\\
&&~~  u^n_{2,b}(m_{p,2,b},v_{2,b}|\textbf{m}_{c,b},{\textbf{m}_{c,b-\!\!1},}{m}_{\text{fb},1,b\!-\!1}) \big) \!\in\!\mathcal{T}^n_{\epsilon}(P_{U_0U_1U_2X_1}).\nonumber
\end{IEEEeqnarray}
Then in block $b$ it sends $x^n_{b}$ with $x_{b,i}=f(u_{0,b,i},u_{1,b,i},u_{2,b,i})$.

By  covering  lemma \cite{book:gamal}, this is successful with high probability for sufficiently large $n$ if 
\begin{eqnarray}\label{eq:MartonEncS0}
{R}'_1+{R}'_2\geq I(U_1;U_2|U_0,X_1)
\end{eqnarray}

\subsubsection{Receiver 1's decoding}
In each block $b\in[1:B+1]$, Receiver 1 looks for  $(\hat{\textbf{m}}^{(1)}_{c,b}, \hat{m}_{p,1,b},\hat{v}_{1,b})$ such that
\begin{IEEEeqnarray*}{rCl}\label{eq:schem1Dec}
&&\big(x^n_{1,b}({\textbf{m}}_{c,b\!-\!1},m_{\text{fb},1,b-1}),y^n_{1,b},\nonumber\\
&&\quad u^n_{1,b}(\hat{m}_{p,1,b},\hat{v}_{1,b}|\hat{\textbf{m}}^{(1)}_{c,b},{\textbf{m}}_{c,b\!-\!1},{m}_{\text{fb},1,b\!-\!1}), \nonumber\\
&&\qquad u^n_{0,b}(\hat{\textbf{m}}^{(1)}_{c,b},{\textbf{m}}_{c,b\!-\!1},{m}_{\text{fb},1,b\!-\!1})\big)\in \mathcal{T}^n_{\epsilon} (P_{X_1U_0U_1Y_1}).
\end{IEEEeqnarray*}

It then compresses $y_{1,b}^{n}$ by finding $m_{\text{fb},1,b}$ satisfying 
\begin{IEEEeqnarray*}{rCl}\label{eq:Sch1CF}
&&\big(x^n_{1,b}({\textbf{m}}_{c,b\!-\!1},m_{\text{fb},1,b-1}),u_{0,b}^n( \textbf{m}_{c,b}|\textbf{m}_{c,b\!-\!1},{m}_{\text{fb},1,b-1}),y^n_{1,b},\nonumber\\
&&\qquad \hat{y}_{1,b}^n(m_{\text{fb},1,b}|\textbf{m}_{c,b},\textbf{m}_{c,b\!-\!1}, m_{\text{fb},1,b\!-\!1}) \big)\in \mathcal{T}^n_{\epsilon}(P_{\hat{Y}_1X_1U_0Y_1}).
\end{IEEEeqnarray*}

Finally, it sends $m_{\text{fb},1,b}$ as feedback message to the transmitter at  rate 
\begin{IEEEeqnarray}{rCl}
\hat{R}_1\leq R_{\text{fb},1},
\end{IEEEeqnarray}
 and  sends  $x^n_{1,b+1}({\textbf{m}}_{c,b},m_{\text{fb},1,b})$ as channel inputs  in   block $b+1$. 


By the independence of the codebooks, the Markov lemma \cite{book:gamal}, packing lemma \cite{book:gamal} and the induction on backward decoding,  these steps are successful with high probability if
\begin{subequations} 
\begin{IEEEeqnarray}{rCl}
R_{p,1}\!+\!R'_1 &<& I(U_1;Y_1|U_0,X_1)\\
R_{p,1}\!+\!R'_1\!+\!R_c&<&I(U_0,U_1;Y_1|X_1)\\
\hat{R}_1&>&I(\hat{Y}_1;Y_1|U_0,X_1)
\end{IEEEeqnarray}
\end{subequations}

\subsubsection{Receiver 2's decoding}
Receiver 2 performs backward decoding.  In each block $b\in[1:B+1]$, It looks for  $(\hat{\textbf{m}}^{(2)}_{c,b\!-\!1}, \hat{m}_{p,2,b},\hat{v}_{2,b},\hat{m}_{\text{fb},1,b\!-\!1})$ such that
\begin{IEEEeqnarray*}{rCl}\label{eq:schem1Dec}
&&\big(x^n_{1,b}(\hat{\textbf{m}}^{(2)}_{c,b\!-\!1},\hat{m}_{\text{fb},1,b-1}),\hat{y}_{1,b}^n(m_{\text{fb},1,b}|\textbf{m}_{c,b},\hat{\textbf{m}}^{(2)}_{c,b-1}, \hat{m}_{\text{fb},1,b\!-\!1}), \nonumber\\
&&\quad u^n_{2,b}(\hat{m}_{p,2,b},\hat{v}_{2,b}|{\textbf{m}}_{c,b},\hat{\textbf{m}}^{(2)}_{c,b-1},\hat{m}_{\text{fb},1,b\!-\!1}),y^n_{2,b}, \nonumber\\
&&\qquad u^n_{0,b}({\textbf{m}}_{c,b},\hat{\textbf{m}}^{(2)}_{c,b-1},\hat{m}_{\text{fb},1,b\!-\!1})\big)\in \mathcal{T}^n_{\epsilon} (P_{X_1U_0U_2Y_2\hat{Y}_1}).
\end{IEEEeqnarray*}

By the independence of the codebooks, the Markov lemma, packing lemma  and the induction on backward decoding,  these steps are successful with high probability if
\begin{subequations} \label{eq:rate0PDF}
\begin{IEEEeqnarray}{rCl}
R_{p,2}+R'_2 &<& I(U_2;Y_2,\hat{Y}_1|U_0,X_1)\\
R_{p,2}+R'_2+R_c+\hat{R}_1&<&I(U_0,U_2,X_1;Y_2)\nonumber\\
&&+I(\hat{Y}_1;U_2,Y_2|U_0,X_1)
\end{IEEEeqnarray}
\end{subequations}

Combine (\ref{eq:MartonEncS0}--\ref{eq:rate0PDF}) and  use Fourier-Motzkin elimination to eliminate $R'_1,R'_2, \hat{R}_1,\hat{R}_2$,  then we obtain Theorem \ref{Them:Scheme1}. 

\section{Achievable rates for fully cooperative RBC with rate-limited feedback}\label{sec:FullCooperation}

\begin{table*}[ht!]
\begin{center}
\caption{ Scheme 2A for fully cooperative BRCs with rate-limited feedback}
\begin{tabular}{>{\bfseries}lccccccc}
\toprule
Block &1&2 &$\ldots$ & $b$ &$\cdots$\\
\midrule
$X_k$& $x^n_{k,1}(1)$ & $x^n_{k,2}(m_{\text{fb},k,1})$ & $\ldots$   & $x^n_{k,b}(m_{\text{fb},k,b\!-\!1})$  & $\ldots$ \\
$U_0$  & $u^n_{0,1}(\textbf{m}_{c,1}|1,1)$ & $u^n_{0,2}(\textbf{m}_{c,2}|\textbf{m}_{\text{fb},1})$  & $\ldots$  &  $u^n_{0,b}(\textbf{m}_{c,b}|\textbf{m}_{\text{fb},b\!-\!1})$   & $\cdots$\\

$U_k$ 	        & $u^n_{k,1}(m_{p,k,1},v_{k,1}|\textbf{m}_{c,1},1,1)$ & $u^n_{k,2}(m_{p,k,2},v_{k,2}|\textbf{m}_{c,2},\textbf{m}_{\text{fb}, 1})$   & $\ldots$   &    $u^n_{k,b}(m_{p,k,b},v_{k,b}|\textbf{m}_{c,b},\textbf{m}_{\text{fb},b\!-\!1})$ &     $\cdots$\\  

$\hat{Y}_k$      &  $\hat{y}_{k,1}^n(m_{\text{fb},k,1}| 1)$ & $\hat{y}_{k,2}^n(m_{\text{fb},k,2}| m_{\text{fb},k,1})$   & $\ldots$   &    $\hat{y}_{k,b}^n(m_{\text{fb},k,b}| m_{\text{fb},k,b\!-\!1})$ &     $\cdots$\\  
\midrule
${Y}_1$	& $(\hat{\textbf{m}}^{(1)}_{c,1},\hat{m}_{p,1,1},\hat{v}_{1,1})$     &  $\leftarrow(\hat{\textbf{m}}^{(1)}_{c,2},\hat{m}_{p,1,2},\hat{v}_{1,2},\hat{m}_{\text{fb},2, 1})$      & $\ldots$ & $\leftarrow (\hat{\textbf{m}}^{(1)}_{c,b},\hat{m}_{p,1,b},\hat{v}_{1,b},\hat{m}_{\text{fb},2,b\!-\!1})$    &         $\cdots$\\ 

${Y}_2$	& $(\hat{\textbf{m}}^{(2)}_{c,1},\hat{m}_{p,2,1},\hat{v}_{2,1})$     & $\leftarrow (\hat{\textbf{m}}^{(2)}_{c,2},\hat{m}_{p,2,2},\hat{v}_{2,2},\hat{m}_{\text{fb},1, 1})$      & $\ldots$ & $\leftarrow (\hat{\textbf{m}}^{(2)}_{c,b},\hat{m}_{p,2,b},\hat{v}_{2,b},\hat{m}_{\text{fb},1,b\!-\!1})$    &         $\cdots$\\ 
\bottomrule
 \label{tab:CFNoisy}
\end{tabular}
\end{center}
\end{table*}
In this section we present two block-Markov coding schemes for fully   cooperative BRC with relay/receiver-transmitter and rate-limited feedback.
\subsection{Scheme 2A: Compress-forward relaying and backward decoding}\label{sec:scheme2A}
In this subsection we propose a block-Markov coding scheme where a sequence of $B$ i.i.d message tuples $(m_{0,b},m_{1,b},m_{2,b})$ are sent over $B+1$ blocks, each consisting of $n$ transmissions. Split the message $m_{k,b}$ in the same way as Section \ref{Sec:ParRBC} and define  $\textbf{m}_{\text{fb},b}:=({m}_{\text{fb},1,b},{m}_{\text{fb},2,b})$.


 In each block $b\in[1:B+1]$,  after obtaining  feedback messages $\textbf{m}_{\text{fb},b-1}$, the transmitter uses Marton's coding to  send $(\textbf{m}_{c,b}, \textbf{m}_{\text{fb},b-1})$ in the cloud centre $u_{0,b}^n$, and $m_{p,1,b}, m_{p,2,b}$ in two different satellites $u^n_{1,b},u^n_{2,b}$, respectively.   Receiver $k\in\{1,2\}$  first uses backward decoding to  decode  $(\textbf{m}_{c,b}, {m}_{p,k,b})$ and reconstructs the other receiver's compression message. Then, it compresses its  channel outputs $y^n_{k,b}$. Finally, it sends $m_{\text{fb},k,b}$ as feedback message and  $x^n_{k,b+1}(m_{\text{fb},k,b})$ as channel inputs  in  next bock. {Here  $(u^n_{0,b},u^n_{1,b},u^n_{2,b})$ are superimposed on  $(x^n_{1,b}, x^n_{2,b})$ that attains cooperation between the transmitter and the receivers. Coding is explained with the help of Table \ref{tab:CFNoisy}.}

\subsubsection{Code construction}
Fix pmf \[ P_{X_1}P_{X_2}{ P_{U_0U_1U_2|X_1X_2}}P_{\hat{Y}_1|X_1Y_1}P_{\hat{Y}_2|X_2Y_2}\] and a function $X=f(U_0,U_1,U_2)$.  For each block $b\in[1\!:\!B\!+\!1]$ and $k\in\{1,2\}$, randomly and independently generate $2^{n\hat{R}_k}$ sequences $x_{k,b}^n(m_{\text{fb},k,b-1})\sim \prod^n_{i=1}P_{X_k}(x_{k,b,i})$, $m_{\text{fb},k,b-1}\in[1:2^{n\hat{R}_k}]$.  For each $m_{\text{fb},k,b-1}$, randomly and independently generate $2^{n\hat{R}_k}$ sequences $\hat{y}_{k,b}^n(m_{\text{fb},k,b}| m_{\text{fb},k,b-1}) \sim \prod^n_{i=1}P_{\hat{Y}_k|X_k}(\hat{y}_{k,b,i}|x_{k,b,i})$.  For each $\textbf{m}_{\text{fb},b-1}$, randomly and independently generate $2^{n{R}_c}$ sequences $u_{0,b}^n( \textbf{m}_{c,b}|\textbf{m}_{\text{fb},b-1})\sim \prod^n_{i=1}P_{U_0|X_1X_2}(u_{0,b,i}|x_{1,b,i},x_{2,b,i})$, $\textbf{m}_{c,b}\in[1:2^{nR_c}]$. 
 For each $(\textbf{m}_{c,b},\textbf{m}_{\text{fb},b-1})$, randomly and independently generate $2^{n({R}_{p,k}+{R}'_k)}$ sequences $u_{k,b}^n( m_{p,k,b},v_{k,b}|\textbf{m}_{c,b},\textbf{m}_{\text{fb},b-1})\sim \prod^n_{i=1}P_{U_k|U_0X_1X_2}(u_{k,b,i}|u_{0,b,i},x_{1,b,i},x_{2,b,i})$, $m_{p,k,b}\in[1:2^{n{R}_{p,k}}]$ and $v_{k,b}\in[1:2^{n{R}'_k}]$. 


\subsubsection{Encoding}
  In each block $b\in[1:B+1]$, assume that the transmitter already knows ${\textbf{m}}_{\text{fb},b-1}$ through  feedback links.  It first looks for a pair of indices $(v_{1,b},v_{2,b})$ such that
\begin{IEEEeqnarray}{rCl}
&&\big(u^n_{0,b}(\textbf{m}_{c,b}|\textbf{m}_{\text{fb},b\!-\!1}),x^n_{1,b}(m_{\text{fb},1,b\!-\!1}),\nonumber\\
&&\quad u^n_{1,b}(m_{p,1,b},v_{1,b}|\textbf{m}_{c,b},\textbf{m}_{\text{fb},b\!-\!1}),x^n_{2,b}(m_{\text{fb},2,b\!-\!1}),\nonumber\\
&&\qquad u^n_{2,b}(m_{p,2,b},v_{2,b}|\textbf{m}_{c,b},\textbf{m}_{\text{fb},b\!-\!1}) \big) \!\in\!\mathcal{T}^n_{\epsilon}(P_{U_0U_1U_2X_1X_2})\nonumber
\end{IEEEeqnarray}
Then in block $b$ it sends $x^n_{b}$ with $x_{b,i}=f(u_{0,b,i},u_{1,b,i},u_{2,b,i})$.

By  covering  lemma, this is successful with high probability for sufficiently large $n$ if 
\begin{eqnarray}\label{eq:MartonEnc}
{R}'_1+{R}'_2\geq I(U_1;U_2|U_0,X_1,X_2).
\end{eqnarray}

\subsubsection{Decoding}
Both receivers perform backward decoding and compress-forward strategy.  In each block $b\in[1:B+1]$, Receiver 1 looks for  $(\hat{\textbf{m}}^{(1)}_{c,b}, \hat{m}_{p,1,b},\hat{v}_{1,b},\hat{m}_{\text{fb},2,b\!-\!1})$ such that
\begin{IEEEeqnarray*}{rCl}\label{eq:schem1Dec}
&&\big(x^n_{1,b}(m_{\text{fb},1,b-1}),x^n_{2,b}(\hat{m}_{\text{fb},2,b-1}),\hat{y}_{2,b}^n(m_{\text{fb},2,b}| \hat{m}_{\text{fb},2,b\!-\!1}),\nonumber\\
&&\quad u^n_{1,b}(\hat{m}_{p,1,b},\hat{v}_{1,b}|\hat{\textbf{m}}_{c,b},{m}_{\text{fb},1,b\!-\!1},\hat{m}_{\text{fb},2,b\!-\!1}),y^n_{1,b}, \nonumber\\
&&\qquad u^n_{0,b}(\hat{\textbf{m}}_{c,b}|{m}_{\text{fb},1,b\!-\!1},\hat{m}_{\text{fb},2,b\!-\!1})\big)\in \mathcal{T}^n_{\epsilon} (P_{X_1X_2U_0U_1Y_1\hat{Y}_2}).
\end{IEEEeqnarray*}

It then compresses $y_{1,b}^{n}$ by finding a unique index $m_{\text{fb},1,b}$ such that 
\begin{IEEEeqnarray*}{rCl}\label{eq:Sch1CF}
&&\big(x^n_{1,b}(m_{\text{fb},1,b\!-\!1}),\hat{y}_{1,b}^n(m_{\text{fb},1,b}| m_{\text{fb},1,b\!-\!1}),y^n_{1,b} \big)\in \mathcal{T}^n_{\epsilon}(P_{\hat{Y}_1X_1Y_1}).
\end{IEEEeqnarray*}

Finally, in  block $b+1$ it sends  $x_{1,b+1}^n(m_{\text{fb},1,b})$ as channel input and forwards $m_{\textbf{fb},1,b}$ through the feedback link at rate:
\begin{IEEEeqnarray}{rCl}\label{eq:feedbackrate}
\hat{R}_1\leq R_{\text{fb},1}. 
\end{IEEEeqnarray}
Receiver 2 performs in a similar way with exchanging indices of 1 and 2 in above steps.

By the independence of the codebooks, the Markov lemma, packing lemma and the induction on backward decoding,  these steps are successful with high probability if
\begin{subequations} \label{eq:rate2APDF}
\begin{IEEEeqnarray}{rCl}
\hat{R}_1&>&I(\hat{Y}_1;Y_1|X_1)\\
\hat{R}_2&>&I(\hat{Y}_2;Y_2|X_2)\\
R_{p,1}\!+\!R'_1 &<& I(U_1;Y_1,\hat{Y}_2|U_0,X_1,X_2)\\
R_{p,2}\!+\!R'_2 &<& I(U_2;Y_2,\hat{Y}_1|U_0,X_1,X_2)\\
R_{p,1}\!+\!R'_1\!+\!R_c&<&I(U_0,U_1;\hat{Y}_2,Y_1|X_1,X_2)\\
R_{p,2}\!+\!R'_2\!+\!R_c&<&I(U_0,U_2;\hat{Y}_1,Y_2|X_1,X_2)\\
R_{p,1}\!+\!R'_1\!+\!R_c\!+\!\hat{R}_2&<&I(U_0,U_1,X_2;Y_1|X_1)\nonumber\\
&&+I(\hat{Y}_2;U_0,U_2,Y_1,X_1|X_2)\qquad\\
R_{p,2}\!+\!R'_2\!+\!R_c\!+\!\hat{R}_1&<&I(U_0,U_2,X_1;Y_2|X_2)\nonumber\\
&&+I(\hat{Y}_1;U_0,U_1,Y_2,X_2|X_1).\quad
\end{IEEEeqnarray}
\end{subequations}


Combine (\ref{eq:MartonEnc}--\ref{eq:rate2APDF}) and  use Fourier-Motzkin elimination to eliminate $R'_1,R'_2, \hat{R}_1,\hat{R}_2$, then  we obtain Theorem \ref{Them:Scheme2A}.

\subsection{Scheme 2B: Hybrid relaying strategy and sliding-window decoding}\label{sec:scheme2B}

\begin{table*}[ht!]
\begin{center}
\caption{ Scheme 2B for  fully cooperative BRCs with rate-limited feedback}
\begin{tabular}{>{\bfseries}lccccccc}
\toprule
Block &1& 2 & $\ldots$ & $b$ &$\cdots$\\
\midrule
$X_1$& $x^n_{1,1}(\textbf{1}_{[3]}, 1)$  & $x^n_{1,2}({\textbf{m}}_{c,1},m_{\text{fb},1,1})$ & $\ldots$   & $x^n_{1,b}({\textbf{m}}_{c,b-1},m_{\text{fb},1,b\!-\!1})$  & $\ldots$ \\
$X_2$& $x^n_{2,1}(1)$ & $x^n_{2,2}(m_{\text{fb},2,1})$  & $\ldots$   & $x^n_{2,b}(m_{\text{fb},2,b\!-\!1})$  & $\ldots$ \\
$U_0$  & $u^n_{0,1}(\textbf{m}_{c,1}|\textbf{1}_{[3]},\textbf{1}_{[2]})$ & $u^n_{0,2}({\textbf{m}}_{c,2}|\textbf{m}_{c,1},\textbf{m}_{\text{fb},1})$ & $\ldots$  &  $u^n_{0,b}({\textbf{m}}_{c,b}|\textbf{m}_{c,b\!-\!1},\textbf{m}_{\text{fb},b\!-\!1})$   & $\cdots$\\

$U_k$ 	        & $u^n_{k,1}(m_{k,1},v_{k,1}|\textbf{m}_{c,1},\textbf{1}_{[3]},\textbf{1}_{[2]})$  & $u^n_{k,2}(m_{k,2},v_{k,2}|\textbf{m}_{c,2},\textbf{m}_{c,1},\textbf{m}_{\text{fb},1})$ & $\ldots$   &    $u^n_{k,b}(m_{k,b},v_{k,b}|\textbf{m}_{c,b},\textbf{m}_{c,b\!-\!1},\textbf{m}_{\text{fb},b\!-\!1})$ &     $\cdots$\\  

$\hat{Y}_1$      &  $\hat{y}_{1,1}^n(m_{\text{fb},1,1},j_{1,1}| \textbf{m}_{c,1},\textbf{1}_{[3]},\textbf{1}_{[2]})$ & $\hat{y}_{1,2}^n(m_{\text{fb},1,2},{ j_{1,2}}| \textbf{m}_{c,2},{ \textbf{m}_{c,1}},\textbf{m}_{\text{fb},1})$ & $\ldots$   &    $\hat{y}_{1,b}^n(m_{\text{fb},1,b},{ j_{1,b}}| \textbf{m}_{c,b},{ \textbf{m}_{c,b\!-\!1}},\textbf{m}_{\text{fb},b\!-\!1})$ &     $\cdots$\\  

$\hat{Y}_2$   & $\hat{y}_{2,2}^n(m_{\text{fb},2,2},j_{2,2}| m_{\text{fb},2,1})$   &  $\hat{y}_{2,1}^n(m_{\text{fb},2,1},j_{2,1}| 1)$  & $\ldots$   &    $\hat{y}_{2,b}^n(m_{\text{fb},2,b},j_{2,b}| m_{\text{fb},2,b\!-\!1})$ &     $\cdots$\\  
\midrule
${Y}_1$	& $\hat{\textbf{m}}^{(1)}_{c,1}\rightarrow$    & $(\hat{\textbf{m}}^{(1)}_{c,2},\hat{m}_{\text{fb},2,1}),(j_{2,1}, \hat{m}_{p,1,1},\hat{v}_{1,1})\to$       & $\ldots$ & $(\hat{\textbf{m}}^{(1)}_{c,b},\hat{m}_{\text{fb},2,b\!-\!1}),(j_{2,b\!-\!1}, \hat{m}_{p,1,b\!-\!1},\hat{v}_{1,b\!-\!1})\to$    &         $\cdots$\\ 

${Y}_2$	& $(\hat{m}_{p,2,1},\hat{v}_{2,1},\hat{j}_{1,b})$          &  $\leftarrow (\hat{\textbf{m}}^{(2)}_{c,1},\hat{m}_{p,2,2},\hat{v}_{2,2},\hat{m}_{\text{fb},1,1},\hat{j}_{1,2})$       & $\ldots$ & $\leftarrow (\hat{\textbf{m}}^{(2)}_{c,b-1},\hat{m}_{p,2,b},\hat{v}_{2,b},\hat{m}_{\text{fb},1,b\!-\!1},\hat{j}_{1,b})$    &         $\cdots$\\ 
\bottomrule
 \label{tab:PDFNoisy}
\end{tabular}
\end{center}
\end{table*}

In Scheme 2A both receivers apply  compress-forward. 
In this subsection, we propose a coding scheme where  one of the two receivers, called  Receiver 1 without loss of generality,  applies a hybrid relaying strategy that combines partially decode-forward and compress-forward.   More specifically,  Receiver 1 first decodes the cloud center containing $(\textbf{m}_{c,b}, {m}_{\text{fb},2,b\!-\!1})$, then reconstructs Receiver 2's compression outputs $\hat{y}^n_{2,b\!-\!1}$ and decodes ${m}_{p,1,b\!-\!1}$ based on the enhanced outputs $(\hat{y}^n_{2,b\!-\!1},{y}^n_{1,b\!-\!1})$.  Finally it compresses  $y_{1,b}^n$, and sends the compression message ${m}_{\text{fb},1,b}$ as feedback  and $x^n_{1,b+1}(\textbf{m}_{c,b},{m}_{\text{fb},1,b})$ as channel inputs in  block $b+1$. Note that Receiver 1  needs to decode $\textbf{m}_{c,b}$ before sending $x^n_{1,b+1}$, thus it  has to use sliding-window decoding instead of backward decoding. The transmitter and the other receiver perform similarly as Scheme 1A. Coding is explained with the help of Table \ref{tab:PDFNoisy}.

\subsubsection{Code construction}
Fix pmf $P_{X_1}P_{X_2}{ P_{U_0U_1U_2|X_1X_2}}P_{\hat{Y}_1|U_0X_1X_2Y_1}P_{\hat{Y}_2|X_2Y_2}$ and a function $X=f(U_0,U_1,U_2)$.  For each block $b\in[1\!:\!B\!+\!1]$, randomly and independently generate $2^{n(R_c+\hat{R}_1)}$ sequences $x_{1,b}^n(\textbf{m}_{c,b-1},m_{\text{fb},1,b-1})\sim \prod^n_{i=1}P_{X_1}(x_{1,b,i})$, for $\textbf{m}_{c,b-1}\in[1:2^{nR_c}]$ and $m_{\text{fb},1,b-1}\in[1:2^{n\hat{R}_1}]$.  Randomly and independently generate $2^{n\hat{R}_2}$ sequences $x_{2,b}^n(m_{\text{fb},2,b-1})\sim \prod^n_{i=1}P_{X_2}(x_{2,b,i})$, for $m_{\text{fb},2,b-1}\in[1:2^{n\hat{R}_k}]$.  For each $(\textbf{m}_{c,b\!-\!1},\textbf{m}_{\text{fb},b-1})$, randomly and independently generate $2^{n{R}_c}$ sequences $u_{0,b}^n( \textbf{m}_{c,b}|\textbf{m}_{c,b\!-\!1},\textbf{m}_{\text{fb},b-1})\sim \prod^n_{i=1}P_{U_0|X_1X_2}(u_{0,b,i}|x_{1,b,i},x_{2,b,i})$, $\textbf{m}_{c,b}\in[1:2^{nR_c}]$. For each $(\textbf{m}_{c,b},\textbf{m}_{c,b-1},\textbf{m}_{\text{fb},b\!-\!1})$, randomly and independently generate $2^{n(
{R}_{p,k}+{R}'_k)}$ sequences $u_{k,b}^n( m_{p,k,b},v_{k,b}|\textbf{m}_{c,b},\textbf{m}_{c,b-1},\textbf{m}_{\text{fb},b-1})\sim \prod^n_{i=1}P_{U_k|U_0X_1X_2}(u_{k,b,i}|u_{0,b,i},x_{1,b,i},x_{2,b,i})$, $m_{p,k,b}\in[1:2^{n{R}_{p,k}}]$ and $v_{k,b}\in[1:2^{n{R}'_k}]$. For each $m_{\text{fb},2,b-1}$, randomly and independently generate $2^{n(\hat{R}_2+\tilde{R}_2)}$ sequences $\hat{y}_{2,b}^n(m_{\text{fb},2,b},j_{2,b}| m_{\text{fb},2,b-1}) \sim \prod^n_{i=1}P_{\hat{Y}_2|X_2}(\hat{y}_{2,b,i}|x_{2,b,i})$, ${j}_{2,b}\in[1:2^{n\tilde{R}_2}]$.  For each $(\textbf{m}_{c,b},\textbf{m}_{c,b-1},\textbf{m}_{\text{fb},b-1})$, randomly and independently generate $2^{n(\hat{R}_1+\tilde{R}_1)}$ sequences $\hat{y}_{1,b}^n(m_{\text{fb},1,b},j_{1,b}|\textbf{m}_{c,b},\textbf{m}_{c,b\!-\!1}, \textbf{m}_{\text{fb},b\!-\!1}) \sim \prod^n_{i=1}P_{\hat{Y}_1|U_0X_1X_2}(\hat{y}_{1,b,i}|u_{0,b,i},x_{1,b,i},x_{2,b,i})$, ${j}_{1,b}\in[1:2^{n\tilde{R}_1}]$.


\subsubsection{Encoding}
  In each block $b\in[1:B+1]$, assume that the transmitter already knows ${\textbf{m}}_{\text{fb},b-1}$ through  feedback links.  It first looks for a pair of indices $(v_{1,b},v_{2,b})$ such that
\begin{IEEEeqnarray}{rCl}
&&\big(u^n_{0,b}({\textbf{m}}_{c,b}|\textbf{m}_{c,b\!-\!1},\textbf{m}_{\text{fb},b\!-\!1}),x_{1,b}^n({\textbf{m}}_{c,b\!-\!1},m_{\text{fb},1,b-1}),\nonumber\\
&&~  u^n_{1,b}(m_{p,1,b},v_{1,b}|\textbf{m}_{c,b},{\textbf{m}}_{c,b\!-\!1},\textbf{m}_{\text{fb},b\!-\!1}),x_{2,b}^n(m_{\text{fb},2,b-1}),\nonumber\\
&&~~ u^n_{2,b}(m_{p,2,b},v_{2,b}|\textbf{m}_{c,b},{\textbf{m}}_{c,b\!-\!1},\textbf{m}_{\text{fb},b\!-\!1}) \big) \!\in\!\mathcal{T}^n_{\epsilon}(P_{U_0U_1U_2X_1X_2}).\nonumber
\end{IEEEeqnarray}
Then in block $b$ it sends $x^n_{b}$ with $x_{b,i}=f(u_{0,b,i},u_{1,b,i},u_{2,b,i})$.

By  covering  lemma, this is successful with high probability for sufficiently large $n$ if 
\begin{eqnarray}\label{eq:MartonEncScheme2}
{R}'_1+{R}'_2\geq I(U_1;U_2|U_0,X_1,X_2).
\end{eqnarray}

\subsubsection{Receiver 1's Decoding}  In each block $b\in[1:B+1]$, Receiver 1 first decodes cloud centre $u^n_{0,b}$ by looking for $(\hat{\textbf{m}}^{(1)}_{c,b}, \hat{m}_{\text{fb},2,b\!-\!1})$ such that
\begin{IEEEeqnarray*}{rCl}
&&\big(u^n_{0,b}(\hat{\textbf{m}}^{(1)}_{c,b}|{\textbf{m}}_{c,b\!-\!1},{m}_{\text{fb},1,b\!-\!1},\hat{m}_{\text{fb},2,b\!-\!1}),x^n_{2,b}(\hat{m}_{\text{fb},2,b\!-\!1}), \nonumber\\
&&\qquad x^n_{1,b}({\textbf{m}}_{c,b-1},m_{\text{fb},1,b\!-\!1}),y^n_{1,b}\big)\in \mathcal{T}^n_{\epsilon}(P_{U_0X_1X_2Y_1}).
\end{IEEEeqnarray*}

It then decodes $(\hat{y}^n_{2,b-1},u^n_{1,b-1})$ by looking for $(\hat{j}_{2,b\!-\!1}, \hat{m}_{1,b\!-\!1},\hat{v}_{1,b\!-\!1})$ such that
\begin{IEEEeqnarray*}{rCl}
&&\big(u^n_{0,b\!-\!1}({\textbf{m}}_{c,b\!-\!1}|\textbf{m}_{c,b\!-\!2},\textbf{m}_{\text{fb},b\!-\!2}), x_{1,b-1}^n({\textbf{m}}_{c,b\!-\!2},m_{\text{fb},1,b-2}), \nonumber\\
&&\quad \hat{y}_{2,b-1}^n(m_{\text{fb},2,b-1},\hat{j}_{2,b-1}|m_{\text{fb},1,b-1}),\nonumber\\
&&\quad\quad u^n_{1,b-1}(\hat{m}_{p,1,b-1},\hat{v}_{1,b-1}|\textbf{m}_{c,b},\textbf{m}_{c,b-1}, \textbf{m}_{\text{fb},b-1}),\\ 
&&\qquad\qquad x^n_{2,b\!-\!1}(m_{\text{fb},2,b-2}),y^n_{2,b-1}\big)\in \mathcal{T}^n_{\epsilon}(P_{U_0U_1X_1X_2\hat{Y}_2Y_1}).
\end{IEEEeqnarray*}

Then, it compresses $y_{1,b}^{n}$ by looking for a unique pair $(m_{\text{fb},1,b},j_{1,b})$ such that
\begin{IEEEeqnarray*}{rCl}\label{eq:Sch1CF}
&&\big(u^n_{0,b}({\textbf{m}}_{c,b}|\textbf{m}_{c,b\!-\!1},\textbf{m}_{\text{fb},b\!-\!1}), x_{1,b}^n({\textbf{m}}_{c,b\!-\!1},m_{\text{fb},1,b-1}),\nonumber\\
&&\qquad \hat{y}_{1,b}^n(m_{\text{fb},1,b},j_{1,b}|\textbf{m}_{c,b}, \textbf{m}_{c,b-1}, \textbf{m}_{\text{fb},b-1}),\nonumber\\
&&\qquad\qquad x^n_{2,b}(m_{\text{fb},2,b-1}),y^n_{1,b} \big)\in \mathcal{T}^n_{\epsilon}(P_{\hat{Y}_1U_0X_1X_2Y_1}).
\end{IEEEeqnarray*}

Finally, in  block $b+1$ it sends  $x_{1,b+1}^n(\textbf{m}_{c,b},m_{\text{fb},1,b})$ as channel input and forwards $m_{\textbf{fb},1,b}$ through the feedback link at rate:
\begin{IEEEeqnarray}{rCl}\label{eq:feedbackrate}
\hat{R}_1\leq  R_{\text{fb},1}. 
\end{IEEEeqnarray}

By the independence of the codebooks, the Markov lemma, packing lemma and the induction on backward decoding,  these steps are successful with high probability if
\begin{subequations} \label{eq:rate1PDF}
\begin{IEEEeqnarray}{rCl}
R_c&<&I(U_0;Y_1|X_1,X_2)\\
R_c+\hat{R}_2&<&I(U_0,X_2;Y_1|X_1)\\
\tilde{R}_2&<&I(\hat{Y}_2;U_0,X_1,Y_1|X_2)\\
R_{p,1}+R'_1 &<&I(U_1;Y_1,\hat{Y}_2|U_0,X_1,X_2)\\
R_{p,1}+R'_1+\tilde{R}_2 &<&I(U_1;Y_1,\hat{Y}_2|U_0,X_1,X_2)\nonumber\\
&&+I(\hat{Y}_2;U_0,X_1,Y_1|X_2)\\
\hat{R}_1+\tilde{R}_1&>&I(\hat{Y}_1;Y_1|U_0,X_1,X_2).
\end{IEEEeqnarray}
\end{subequations}

\subsubsection{Receiver 2's Decoding}
Receiver 2 performs backward decoding.  In each block $b\in[1:B+1]$, Receiver 2 looks for  $(\hat{\textbf{m}}^{(2)}_{c,b-1}, \hat{m}_{p,2,b},\hat{v}_{2,b},\hat{m}_{\text{fb},1,b\!-\!1},\hat{j}_{1,b})$ such that
\begin{IEEEeqnarray*}{rCl}\label{eq:schem1Dec}
&&\big(u^n_{0,b}({\textbf{m}}_{c,b}|\hat{\textbf{m}}^{(2)}_{c,b\!-\!1},\hat{m}_{\text{fb},1,b\!-\!1},{m}_{\text{fb},2,b\!-\!1}),x^n_{2,b}(m_{\text{fb},2,b-1}),\nonumber\\
&& \quad \hat{y}_{1,b}^n(m_{\text{fb},1,b},\hat{j}_{1,b}|{\textbf{m}}_{c,b},\hat{\textbf{m}}^{(2)}_{c,b-1}, \hat{m}_{\text{fb},1,b-1}),y^n_{2,b},\nonumber\\
&&\qquad u^n_{2,b}(\hat{m}_{p,2,b},\hat{v}_{2,b}|{\textbf{m}}_{c,b},\hat{\textbf{m}}^{(2)}_{c,b\!-\!1},\hat{m}_{\text{fb},1,b\!-\!1},{m}_{\text{fb},2,b\!-\!1}), \nonumber\\
&&\qquad\quad x^n_{1,b}(\hat{\textbf{m}}^{(2)}_{c,b\!-\!1},\hat{m}_{\text{fb},1,b\!-\!1})\big)\in \mathcal{T}^n_{\epsilon} (P_{X_1X_2U_0U_1Y_2\hat{Y}_1}).
\end{IEEEeqnarray*}

Also, it  compresses $y_{2,b}^{n}$ by looking for  a unique pair $(m_{\text{fb},2,b},j_{2,b})$ such that .
\begin{IEEEeqnarray*}{rCl}\label{eq:Sch1CF}
&&\big(x^n_{2,b}(m_{\text{fb},1,b\!-\!1}),y^n_{2,b},\\
&&\qquad \hat{y}_{2,b}^n(m_{\text{fb},2,b},j_{2,b}| m_{\text{fb},2,b\!-\!1}) \big)\in \mathcal{T}^n_{\epsilon}(P_{\hat{Y}_2X_2Y_2}).
\end{IEEEeqnarray*}
Finally, in  block $b+1$ it sends  $x_{2,b+1}^n(m_{\text{fb},2,b})$ as channel input and forwards $m_{\textbf{fb},2,b}$ through the feedback link at rate:
\begin{IEEEeqnarray}{rCl}\label{eq:feedbackrate}
\hat{R}_2\leq R_{\text{fb},2}. 
\end{IEEEeqnarray}

By the independence of the codebooks, the Markov lemma, packing lemma and the induction on backward decoding,  these steps are successful with high probability if
\begin{subequations} \label{eq:rate1PDFScheme2}
\begin{IEEEeqnarray}{rCl}
\tilde{R}_1 &< & I(\hat{Y}_1;Y_2,U_2|U_0,X_1,X_2)\\
R_{p,2}\!+\!{R}'_2&<&I(U_2;Y_2,\hat{Y}_1|U_0,X_1,X_2)\\
R_{p,2}\!+\!{R}'_2\!+\!\tilde{R}_1&<&I(U_2;Y_2,\hat{Y}_1|U_0,X_1,X_2)\nonumber\\
&&+ I(\hat{Y}_1;Y_2|U_0,X_1,X_2)\\
R_c\!+\!\hat{R}_1\!+\!R_{p,2}\!+\!{R}'_2+\tilde{R}_1&<&I(\hat{Y}_1;Y_2,U_2|U_0,X_1,X_2)\qquad \nonumber\\
&& +I(U_0,U_2,X_1;Y_2|X_2)\\
\hat{R}_2+\tilde{R}_2&>&I(\hat{Y}_2;Y_2|X_2).
\end{IEEEeqnarray}
\end{subequations}

Combine (\ref{eq:MartonEncScheme2}--\ref{eq:rate1PDFScheme2}) and  use Fourier-Motzkin elimination to eliminate $R'_1,R'_2, \hat{R}_1,\hat{R}_2,\tilde{R}_1,\tilde{R}_2$, then  we obtain Theorem \ref{Them:Scheme2B}.

\section{Conclusion}\label{Sec:conclusion}
In this paper, we studied partially   and fully cooperative RBCs with relay/receiver-transmitter and rate-limited  feedback. New coding schemes have been proposed to improve on the known rate regions that consider either feedback or relay cooperation, but not both. Specifically, our first rate region strictly improves on Liang and Kramer's region for the partially cooperative RBCs without feedback, and our second rate region strictly improves Wu and Wigger's region for the BCs with feedback but in the absence of relay cooperation. These two results together demonstrates that using feedback and relay simultaneously is a powerful tool to  improve the rate performance of networks.

\end{document}